\documentclass[preprint,11pt,authoryear]{elsarticle}
\usepackage{setspace}
\usepackage[margin=1in]{geometry}
\usepackage{amsmath,amssymb,amsthm,bm}
\usepackage{booktabs,graphicx,multirow,rotating}
\usepackage[table]{xcolor}
\usepackage[colorlinks=true, linkcolor=blue, citecolor=blue, urlcolor=blue]{hyperref}
\newtheorem{definition}{Definition}

\newtheorem{lemma}{Lemma}
\newtheorem{theorem}{Theorem}
\newtheorem{proposition}{Proposition}

\theoremstyle{remark}
\newtheorem{remark}{Remark}

\newcommand{\dnorm}[1]{\lVert #1 \rVert_{*}}
\newcommand{\norm}[1]{\lVert #1 \rVert}

\begin{document}

\begin{frontmatter}

\title{Multi-period Mean-Expectile Portfolio Optimization under Wasserstein
Ambiguity: Reformulation, Degeneracy and the Role of the Ground Metric}

\author{Rupendra Yadav\corref{cor1}}
\ead{rupendra.yadav@maths.iitd.ac.in}
\author{Aparna Mehra}
\ead{apmehra@maths.iitd.ac.in}
\cortext[cor1]{Corresponding author.}
\address{Department of Mathematics, Indian Institute of Technology Delhi, Hauz Khas, New Delhi 110016, India}


\begin{abstract}
Expectiles are the only law-invariant risk measures that are both coherent and elicitable. Unlike Conditional Value-at-Risk (CVaR), however, they do not admit a Rockafellar--Uryasev representation that admits tractable Wasserstein reformulations. We address this difficulty by developing an envelope theorem for worst-case expectiles that characterizes the worst-case expectile over a Wasserstein ambiguity set as the unique root of a worst-case expectation with a two-piece affine integrand. This representation permits direct application of standard Wasserstein duality.

Using this result, we reformulate a multi-period tri-level mean--expectile portfolio problem as four parametric linear programs with \(O(N)\) constraints. We establish four structural properties of the proposed model: an endogenously damped price of robustness, a decision-dependent critical radius beyond which the expectile tail component becomes inactive, exact recovery of the nominal model at zero ambiguity, and a characterization of how the Wasserstein ground metric determines whether the limiting portfolio becomes more concentrated or more diversified. Numerical experiments on 90 FTSE constituents over 3,341 out-of-sample trading days show that the expectile model outperforms a CVaR model matched on ambiguity set, radius, ground metric, and trade-off weight in all nine parameter cells---significantly so whenever the radius is non-trivial. The experiments further confirm the predicted degeneracy under the \(\ell_\infty\) ground metric.
\end{abstract}

\begin{keyword}
Portfolio optimization \sep Expectile risk measure \sep
Distributionally robust optimization \sep Wasserstein metric \sep Multi-period
investment
\end{keyword}

\end{frontmatter}

\section{Introduction}
\label{sec:intro}

\citet{Markowitz1952} established the quantitative foundation of modern portfolio theory by formalizing the critical trade-off between risk and return within a static single-period framework. Although foundational, this paradigm often misaligns with realistic market behaviors that require continuous wealth reallocation across sequential horizons to capture and compound long-term cumulative gains \citet{li2022multi, yan2026wasserstein, wu2023distributionally}. Moving from a single-period to a multi-period framework is not merely a matter of repeating the same optimization problem. It also raises questions concerning the intertemporal evaluation of risk and the consistency of sequential allocations \citep{cui2019time,fiessinger2025time}. These issues become particularly relevant when the return distribution itself is uncertain, since both the portfolio allocation and its risk assessment depend on an estimated probability law. Within this dynamic framework, the choice of an appropriate risk measure becomes particularly important, as it determines how portfolio losses are evaluated across successive investment periods and directly influences the resulting portfolio allocations.

Value-at-Risk (VaR) remains widely used in financial practice, but its lack of subadditivity and its inability to account for the magnitude of losses beyond the specified quantile are well known \citep{jorion1996risk2,artzner1999coherent}. Conditional Value-at-Risk (CVaR) addresses these shortcomings by averaging losses beyond the VaR threshold and, more importantly from an optimization perspective, admits the representation of \citet{rockafellar2000optimization}. This representation expresses CVaR as the minimum of an expectation involving a piecewise-linear loss function, which has made mean--CVaR models particularly attractive in both stochastic and robust portfolio optimization.

Expectiles, introduced by \citet{newey1987asymmetric} as asymmetric least-squares generalizations of the mean. \citet{bellini2014generalized}place them within the family of generalized quantiles and establish that they are coherent in the sense of \citet{artzner1999coherent} precisely for confidence levels at or above one half. Expectiles are proposed for risk management by \citet{kuan2009assessing} and are an alternative with a property that CVaR lacks: they are \emph{elicitable}. Elicitability, formalized by \citet{gneiting2011making}, is the existence of a consistent scoring function, and it is what permits a risk forecast to be compared against realized outcomes; \citet{bellini2015elicitable} characterizes which risk measures possess it, and \citet{bellini2019backtesting} constructs the corresponding backtests. Elicitability is what makes a risk forecast testable against realized outcomes, and its absence for CVaR is a recognized practical difficulty. Expectiles also depend on the entire loss distribution rather than on a fixed tail slice, so they respond to the magnitude of losses and not merely to their ordering. \citet{ziegel2016coherence} showed that expectiles are the only law-invariant coherent risk measures that are elicitable. \citet{bellini2017risk} establishes the correspondence between the expectile level and a target exceedance frequency. \citet{jakobsons2016scenario} gives a scenario-aggregation linear program for portfolio expectile minimization, exploiting the dual representation of the expectile as a supremum over a set of densities with bounded ratio. Expectiles have subsequently been examined from the perspective of portfolio performance and reward--risk evaluation \citep{rossello2022performance}. These studies establish expectiles as a tractable and economically meaningful alternative to CVaR when the return distribution is treated as known or is replaced directly by its empirical counterpart. Their performance may nevertheless be sensitive to sampling variation, since the empirical distribution represents only one estimate of the unknown return-generating process.

Distributionally robust optimization addresses uncertainty in the underlying probability distribution by replacing a single estimated distribution with an ambiguity set of plausible distributions. The optimization problem is then solved with respect to the worst-case distribution within this set. Early DRO models commonly employed moment-based ambiguity sets \citep{delage2010distributionally}. Such sets are defined through information on moments such as the mean and covariance but do not fully preserve the shape of the empirical distribution. Wasserstein ambiguity sets provide an alternative by constructing a neighborhood around the empirical distribution and allowing controlled perturbations of the observed data. They have become widely used because they retain distributional information from the sample and admit tractable finite-dimensional reformulations for a broad class of optimization problems. The duality framework of \citet{mohajerin2018data} provides a fundamental basis for such reformulations while further developments are given by \citet{gao2023distributionally} and \citet{zhang2025short}. Statistical support for Wasserstein ambiguity sets is provided by concentration results such as those of \citet{fournier2015rate} and \citet{bolley2007quantitative}.

Applications of Wasserstein distributionally robust optimization to portfolio selection have largely focused on CVaR. The representation of \citet{rockafellar2000optimization} expresses CVaR through an expectation of a piecewise linear loss function and therefore facilitates tractable reformulations of the inner distributional problem. \citet{luan2022robust} study a single-period worst-case mean--CVaR portfolio model. More recently \citet{wang2026multi} extend this framework to a multi-period tri-level model with a target-return constraint imposed within the ambiguity set. Their formulation provides the basic multi-period architecture considered in this paper. An important element of a Wasserstein ambiguity set is the choice of ground metric. This choice determines the geometry of admissible distributional perturbations and also determines the dual norm that appears in the reformulated problem. In portfolio models this dual norm may depend on the allocation vector. The ground metric can therefore affect not only tractability but also the resulting portfolio allocations.

Worst-case expectiles under Wasserstein ambiguity have been studied in a univariate model by \citet{hu2024extreme}. Their analysis uses the supremum representation of expectiles developed by \citet{bellini2014generalized}. This representation expresses an expectile through a one-parameter family of combinations of expected shortfall and the mean. Since this family does not depend on the underlying distribution the two supremum operations can be interchanged. The resulting inner problem is then expressed as a worst-case distortion risk measure using the characterization of \citet{liu2022inf}.

The setting considered here differs in two important respects. First the loss distribution is induced by the portfolio allocation $x_t$. Second a mean-return equality constraint is imposed within the Wasserstein ambiguity set. These features couple the allocation decision with the inner distributional problem and prevent the preceding univariate argument from providing the required portfolio reformulation directly. Section~\ref{sec:properties} discusses this distinction in detail. Remark~\ref{rem:hcm} further shows that Proposition~\ref{prop:degeneracy} reduces to Theorem~2(i) of \citet{hu2024extreme} in the corresponding scalar setting.

This leaves a distinct problem in multi-period portfolio optimization. To the best of our knowledge an exact formulation for a multi-period expectile portfolio problem with a continuous-support Wasserstein ambiguity set and a return constraint imposed within that set has not yet been developed. The present paper addresses this problem.

The analysis also reveals that the behavior of the robust model is governed jointly by the ambiguity radius and the ground metric. In particular the sensitivity of the worst-case expectile to the Wasserstein radius depends on an endogenous exceedance probability rather than solely on an exogenous confidence level. Moreover the model admits a decision-dependent critical radius beyond which the tail component of the expectile reformulation becomes inactive. The resulting portfolio behavior then depends critically on the ground metric. Under the $\ell_{\infty}$ ground metric used in \citet{wang2026multi} the corresponding dual norm is constant on the simplex and the limiting allocation favors concentration. Under the $\ell_{1}$ ground metric adopted in this paper the dual norm is $\ell_{\infty}$ and the limiting allocation instead favors diversification. These results show that both the size and the geometry of the ambiguity set affect the economic behavior of the robust portfolio model.

The radius of the ambiguity set is one of the free parameter. Asymptotic rules follow from the concentration bounds cited above, but these assume independent sampling, whereas financial returns exhibit volatility clustering. We therefore assess data-driven radius calibration with a dependence-preserving stationary bootstrap \citep{politis1994stationary} and evaluate it relative to a model-implied critical radius.
\paragraph{Contributions}
\begin{enumerate}
\item We establish an \emph{envelope theorem} (Theorem~\ref{thm:envelope}) that characterizes the worst-case expectile over any ambiguity set as the unique root of a robust expectation. The argument applies more generally to risk measures with a monotone fixed-point representation.

\item We reformulate the multi-period tri-level problem exactly as four
parametric linear programs with $O(N)$ constraints
(Theorem~\ref{thm:reformulation}). An endpoint result
(Lemma~\ref{lem:betaL}) reduces the number of cases that would arise from a direct enumeration and keeps the formulation linear in the sample size.

\item We derive four structural results: a closed-form, endogenously damped
price of robustness (Proposition~\ref{prop:sensitivity}); a decision-dependent
degeneracy threshold (Proposition~\ref{prop:degeneracy}); exact recovery of the
nominal model at zero radius (Proposition~\ref{prop:nominal}); and a
characterization of how the ground metric governs concentration or
diversification (Proposition~\ref{prop:norm}). The scalar special case of the
threshold is consistent with \citet{hu2024extreme}; its dependence on the
portfolio allocation and the ground metric is new.

\item We provide a rolling empirical study of 90 FTSE constituents over 3,341
out-of-sample trading days. The design matches competing expectile and CVaR
models on the ambiguity set, radius, ground metric, and trade-off weight, and
uses dependence-robust inference. It also documents the predicted
$\ell_\infty$ degeneracy in observed data.
\end{enumerate}

Section~\ref{sec:model} develops the model, Section~\ref{sec:reformulation}
derives the exact reformulation, and Section~\ref{sec:properties} presents the
structural results. Section~\ref{sec:empirical} reports the empirical study, and Section~\ref{sec:conclusions} concludes. Analytical proofs are in~\ref{appen}; the online supplement provides sensitivity analyses, statistical inference, and additional diagnostics.

\section{Model construction}
\label{sec:model}

\subsection{Notations and Definitions}
\label{sec:notation}

We consider an investor who allocates wealth across $n$ risky assets over $T$ consecutive investment periods, rebalancing at the end of each period. Let
$\xi_{t}=[\xi_{t,1},\xi_{t,2},\dots,\xi_{t,n}]^{\top}$ denote the random return vector of the $n$ assets in period $t$, defined on a probability space $(\Omega,\mathcal{F}_{t},P_{t})$ with true distribution $F_{t}$ and support $\Xi\subseteq\mathbb{R}^{n}$. Let $x_{t}=[x_{t,1},\dots,x_{t,n}]^{\top}$ denote the wealth allocation decided at the end of period $t-1$ and held through period $t$. Throughout, the investor is restricted to the simplex
\begin{equation}
\mathcal{X}=\bigl\{x\in\mathbb{R}^{n}\;:\;e^{\top}x=1,\;x\geq 0\bigr\},
\label{eq:simplex}
\end{equation}
where $e$ denotes the vector of ones. The no short-selling constraint requires all portfolio weights to be nonnegative and allows zero allocations to individual assets. The portfolio loss in period $t$ is
\begin{equation}
L_{x_{t}}:=-\xi_{t}^{\top}x_{t}.
\nonumber
\end{equation}
All random variables are assumed integrable, $L_{x_{t}}\in L^{1}(\Omega,\mathcal{F}_{t},P_{t})$; finiteness of the first moment is essential for the expectile risk measure to be well defined
\citep{newey1987asymmetric}.

\begin{definition}[The expectile risk measure; \citealp{newey1987asymmetric}]
For a confidence level $\tau\in(0,1)$, the $\tau$-expectile of the portfolio loss $L_{x}$ under a distribution $F$ is the unique solution $e^{F}_{\tau}(L_{x})$ of
\begin{equation}
\tau\,\mathbb{E}_{F}\bigl[(L_{x}-e^{F}_{\tau}(L_{x}))^{+}\bigr]
-(1-\tau)\,\mathbb{E}_{F}\bigl[(L_{x}-e^{F}_{\tau}(L_{x}))^{-}\bigr]=0,
\label{eq:expectile-def}
\end{equation}
where $(\cdot)^{+}=\max\{0,\cdot\}$ and $(\cdot)^{-}=\max\{0,-\,\cdot\}$. Equivalently, $e^{F}_{\tau}$ is the minimizer of the asymmetric quadratic loss
$\mathbb{E}_{F}[\tau((L_{x}-e)^{+})^{2}+(1-\tau)((L_{x}-e)^{-})^{2}]$.
\label{sec:expectile}

\end{definition}

The expectile satisfies the following properties \citep{bellini2017risk,
bellini2014generalized}: for all $L^{1}_{x},L^{2}_{x}\in L^{1}$,
\begin{enumerate}
\item[(a)] $e_{\tau}(aL_{x}+b)=a\,e_{\tau}(L_{x})+b$ for every $a>0$, $b\in\mathbb{R}$
      (positive homogeneity and cash invariance);
\item[(b)] $L^{1}_{x}\leq L^{2}_{x}$ almost surely implies
      $e_{\tau}(L^{1}_{x})\leq e_{\tau}(L^{2}_{x})$ (monotonicity), for every $\tau\in(0,1)$;
\item[(c)] $e_{\tau}(L^{1}_{x}+L^{2}_{x})\leq e_{\tau}(L^{1}_{x})+e_{\tau}(L^{2}_{x})$
      whenever $\tau\in[\tfrac{1}{2},1)$ (subadditivity), the inequality being reversed for
      $\tau\in(0,\tfrac{1}{2}]$;
\item[(d)] $e_{\tau}(-L_{x})=-e_{1-\tau}(L_{x})$.
\end{enumerate}
Properties (a)--(c) establish that $e_{\tau}$ is a coherent risk measure in the sense of \citet{artzner1999coherent} precisely on the range $\tau\in[\tfrac{1}{2},1)$, which we adopt throughout. 

Throughout the paper we use the constant
\begin{equation}
\kappa_{\tau}:=\frac{2\tau-1}{1-\tau}\;\geq\;0,
\qquad \tau\in[\tfrac{1}{2},1).
\nonumber
\end{equation}
It is strictly increasing in $\tau$, vanishes at $\tau=\tfrac12$ (where the expectile reduces to the mean) and diverges as $\tau\uparrow 1$.

\subsection{Separability and time consistency}
\label{sec:timeconsistency}

Multi-period risk optimization raises the issue of time consistency, namely whether a policy that is optimal at the initial decision date remains optimal at subsequent rebalancing intervals. This issue is well documented for dynamic mean--variance models \citep{lioui2013time,cui2017time}. We follow the approach of \citet{chen2013optimal}, in which a separable conditional risk mapping allows the multi-period problem to be decomposed into a sequence of per-period optimization problems while preserving time consistency. A key requirement for this decomposition is the monotonicity of the per-period risk mapping. Since the expectile functional is monotone, this condition is satisfied in the present framework.

\begin{definition}[Separable conditional risk mapping; \citealp{chen2013optimal}]
A conditional risk mapping $\rho_{t_{1},t_{2}}(\cdot)$ is separable if it decomposes into a sum of single-period components,
\begin{equation}
\rho_{t_{1},t_{2}}\bigl(-w_{t_{1},t_{2}}\bigr)
=\sum_{i=t_{1}+1}^{t_{2}}\mathbb{E}\bigl[\rho_{i}(-w_{i}\mid\mathcal{F}_{i-1})\mid\mathcal{F}_{t_{1}}\bigr],
\label{eq:separable}
\end{equation}
where $-w_{t_{1},t_{2}}$ denotes the investor's loss from $t_{1}$ to $t_{2}$.
\label{def:separable}
\end{definition}

The objective adopted in this paper aggregates per-period mean-expectile risk terms,
\begin{equation}
\min_{x}\;\sum_{t=1}^{T}
\Bigl[(1-\lambda)\,\mathbb{E}_{F_{t}}\bigl(-\xi_{t}^{\top}x_{t}\bigr)
+\lambda\,e^{F_{t}}_{\tau}\bigl(-\xi_{t}^{\top}x_{t}\bigr)\Bigr],
\qquad \tau\in[\tfrac12,1),\;\lambda\in(0,1).
\label{eq:objective}
\end{equation}
Each summand involves only the period-$t$ allocation $x_{t}$ and the period-$t$ return vector $\xi_{t}$, the intertemporal linkage being reduced to the normalization $e^{\top}x_{t}=1$.

\begin{lemma}[Separability and time consistency]
\label{lem:timeconsistency}
For $\tau\in[\tfrac12,1)$ and $\lambda\in(0,1)$, the per-period mapping
\[
\rho_{t}(X_{t+1})=(1-\lambda)\,\mathbb{E}_{F_{t}}\bigl(-\xi_{t}^{\top}x_{t}\bigr)
+\lambda\,e^{F_{t}}_{\tau}\bigl(-\xi_{t}^{\top}x_{t}\bigr)
\]
is monotone and coherent. Consequently the objective \eqref{eq:objective} is separable in the sense of Definition~\ref{def:separable}, the multi-period program decouples across periods, and sequential re-optimization at each rebalancing date is time consistent.
\end{lemma}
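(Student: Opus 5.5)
We argue by first verifying the axioms for the per-period mapping $\rho_t$ directly from properties (a)--(c) of the expectile and the corresponding properties of the expectation, and then invoking the decomposition framework of \citet{chen2013optimal}. Write $\rho_t(L)=(1-\lambda)\mathbb{E}_{F_t}[L]+\lambda e^{F_t}_\tau(L)$ for an integrable loss $L$, so that $\rho_t$ is a convex combination, with weights $1-\lambda,\lambda\in(0,1)$, of two functionals on $L^1$.

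\emph{Step 1 (coherence and monotonicity).} Each axiom is preserved under nonnegative linear combinations whose weights sum to one, so it suffices to check it for each summand. Monotonicity holds for $\mathbb{E}$ trivially and for $e_\tau$ by (b), so $L^1\le L^2$ a.s.\ gives $\rho_t(L^1)\le\rho_t(L^2)$. Positive homogeneity and cash invariance follow from linearity of $\mathbb{E}$ and from (a): $\rho_t(aL+b)=(1-\lambda)(a\mathbb{E}L+b)+\lambda(a e_\tau(L)+b)=a\rho_t(L)+b$, where the weights summing to one is what preserves cash invariance. Subadditivity follows from additivity of $\mathbb{E}$ and from (c), which is valid because $\tau\ge\tfrac12$. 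Hence $\rho_t$ is coherent and, in particular, monotone. Convexity also follows, from subadditivity together with homogeneity.

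\emph{Step 2 (separability and decoupling).} We take the conditional risk mapping to be $\rho_{0,T}(-w_{0,T}):=\sum_{t=1}^T\mathbb{E}[\rho_t(-\xi_t^\top x_t\mid\mathcal{F}_{t-1})]$. This is exactly the form \eqref{eq:separable} with $t_1=0$, $t_2=T$. When $F_t$ is the period-$t$ (conditional) law, the inner term is $\mathcal{F}_{t-1}$-measurable, and the outer expectation reduces it to the summand in \eqref{eq:objective}. Since the $t$-th summand depends only on $x_t$, and the feasible set is the product $\mathcal{X}^T$ (the constraint $e^\top x_t=1$ is imposed period by period), the minimum of the sum equals the sum of the per-period minima. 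The program therefore decouples into $T$ independent problems $\min_{x_t\in\mathcal{X}}\rho_t(-\xi_t^\top x_t)$.

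\emph{Step 3 (time consistency).} We will apply the result of \citet{chen2013optimal}: a separable conditional risk mapping built from monotone one-step mappings is time consistent. The dynamic-programming recursion $V_{t-1}=\min_{x_t}\{\rho_t(\cdot\mid\mathcal{F}_{t-1})+\mathbb{E}[V_t\mid\mathcal{F}_{t-1}]\}$ has a continuation value that does not depend on $x_t$. Hence the restriction of the time-$0$ optimal policy to dates $t\ge s$ remains optimal for the problem posed at $s$, which is exactly the claim that sequential re-optimization is time consistent. Monotonicity from Step 1 is the hypothesis needed for the recursion to be well posed.

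The main obstacle is not the algebra but making Step 3 rigorous. One must match the paper's informal notation, $\rho_t(X_{t+1})$ written with the unconditional law $F_t$, to the conditional setting of \citet{chen2013optimal}. We would resolve this by interpreting $F_t$ as the regular conditional distribution of $\xi_t$ given $\mathcal{F}_{t-1}$ and applying the expectile properties conditionally, pathwise. We would then note that, because wealth does not enter the losses multiplicatively here (losses are normalized per unit wealth), no state variable links the periods, so the decoupling is exact.
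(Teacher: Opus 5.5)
Your proposal is correct and follows essentially the same route as the paper. Both arguments obtain monotonicity and coherence of $\rho_t$ from properties (a)--(c) together with closure under convex combination, and then appeal to \citet{chen2013optimal}, for which monotonicity is the only hypothesis. Your explicit decoupling argument (the product feasible set $\mathcal{X}^T$ and a continuation value independent of $x_t$) and your reading of $F_t$ as a conditional law are elaborations the paper leaves implicit, not a different approach.
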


Lemma~\ref{lem:timeconsistency} identifies $\tau\geq\tfrac12$ as the substantive restriction: for $\tau<\tfrac12$ subadditivity fails, coherence is lost. We
emphasize that the time-inconsistency documented by \citet{cui2017time} for
non-separable terminal wealth objectives does not arise here, precisely because
\eqref{eq:objective} is a separable sum of per-period functionals rather than a single risk functional applied to terminal wealth.

\subsection{Why the substitution is not a reparametrization}
\label{sec:nonrepresentability}

Both $\mathrm{CVaR}_{\alpha}$ and $e_{\tau}$ are coherent and law invariant, which invites the conjecture that a robust mean-expectile model is a robust mean-CVaR model with different constants. The following lemma rules this out.

\begin{lemma}[Non-representability]
\label{lem:nonrep}
For $\tau\in(\tfrac12,1)$ the expectile $e_{\tau}$ is coherent and law invariant but not comonotone additive. Consequently $e_{\tau}$ is not a spectral risk measure and admits no representation as a single mixture
$e_{\tau}(\cdot)=\int_{0}^{1}\mathrm{CVaR}_{\alpha}(\cdot)\,d\mu(\alpha)$
for any probability measure $\mu$ on $[0,1)$.
\end{lemma}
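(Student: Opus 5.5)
The plan is to exhibit an explicit comonotone pair on which additivity fails, and then use the fact that every mixture of CVaRs is comonotone additive. Coherence for $\tau\in[\tfrac12,1)$ is property (c) together with (a)--(b), and law invariance is immediate from \eqref{eq:expectile-def}, since the defining equation depends on $F$ only. So the substance is the failure of comonotone additivity.

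For the counterexample I will take $U$ to be Bernoulli with $P(U=1)=p$ and set $L^{1}=U$, $L^{2}=U^{2}=U$. This pair is trivially comonotone, and additivity holds for it by positive homogeneity, so it is useless. I will therefore use two nondecreasing functions of a common variable that are not positive multiples of each other.

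Let $V$ take the values $0,1,2$ with probability $\tfrac13$ each, and put $L^{1}=\mathbf 1\{V\ge1\}$ and $L^{2}=\mathbf 1\{V\ge2\}$. Both are nondecreasing in $V$, hence comonotone. For a Bernoulli($p$) loss, \eqref{eq:expectile-def} gives $\tau p(1-e)=(1-\tau)(1-p)e$, so
\[
e_\tau=\frac{\tau p}{\tau p+(1-\tau)(1-p)}.
\]
This yields $e_\tau(L^{1})$ with $p=\tfrac23$ and $e_\tau(L^{2})$ with $p=\tfrac13$. The sum $L^{1}+L^{2}$ takes the values $0,1,2$ uniformly, and its expectile solves a piecewise-linear equation that can be computed in closed form, say at $\tau=\tfrac34$. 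I expect to verify numerically that the two sides differ. A fallback is a general argument: comonotone additivity plus law invariance plus coherence characterizes spectral measures (Kusuoka), and the acceptance set of $e_\tau$ is not that of any spectral measure. The explicit computation is cleaner, however.

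The main obstacle is the arithmetic at the chosen $\tau$. With $\tau=\tfrac34$ we get $e_\tau(L^{1})=\tfrac{1/2}{1/2+1/12}=\tfrac67$ and $e_\tau(L^{2})=\tfrac{1/4}{1/4+1/6}=\tfrac35$, whose sum is $\tfrac{51}{35}\approx1.457$. For the sum I guess the root lies in $[1,2]$. The defining equation then reads
\[
\tfrac34\cdot\tfrac13(2-e)=\tfrac14\cdot\tfrac13\bigl(e+(e-1)\bigr),
\]
which gives $6-3e=2e-1$, so $e=\tfrac75=1.4$. This lies in $[1,2]$, as guessed, and $1.4\neq\tfrac{51}{35}$. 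Hence additivity fails, and in fact the inequality is strict in the subadditive direction, consistent with (c).

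For the consequence, I recall that each $\mathrm{CVaR}_\alpha$ is comonotone additive: on a comonotone pair the quantile functions add, and $\mathrm{CVaR}_\alpha$ is an integral of the quantile function. Comonotone additivity is preserved under integration against any probability $\mu$ on $[0,1)$, with finiteness on $L^1$ assumed wherever the mixture is defined. Hence any mixture $\int\mathrm{CVaR}_\alpha\,d\mu$ is comonotone additive. Since $e_\tau$ is not, no such representation exists, and in particular $e_\tau$ is not spectral, because spectral measures are exactly these mixtures.
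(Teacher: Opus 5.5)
\textbf{Comparison with the paper.} The paper's proof works entirely by citation. It takes from Kusuoka the equivalence between comonotone additivity and representability as a CVaR mixture. It takes from Bellini et al.\ (2014) the fact that $e_\tau$ fails comonotone additivity for every $\tau\neq\tfrac12$. You replace the second citation with an explicit comonotone pair. From the first you need only the elementary direction: CVaR mixtures are comonotone additive because quantile functions add on comonotone pairs. Your argument is therefore self-contained and also quantifies the subadditivity gap, while the paper's is shorter but rests on external results. Your Bernoulli formula and the computation at $\tau=\tfrac34$ are correct ($\tfrac{51}{35}>\tfrac{49}{35}=\tfrac75$). Your losses are bounded, so every $\mathrm{CVaR}_\alpha$ is finite on them and the mixture needs no integrability caveat.

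\textbf{The gap.} The lemma asserts non-additivity for each $\tau\in(\tfrac12,1)$, but you check only $\tau=\tfrac34$. That single check says nothing about other levels, whereas the paper's citation covers all $\tau\neq\tfrac12$ at once.

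\textbf{How to close it.} The same example works if you keep $\tau$ symbolic. Your formula gives $e_\tau(L^{1})=2\tau/(1+\tau)$ and $e_\tau(L^{2})=\tau/(2-\tau)$. For $\tau\geq\tfrac12$ the expectile of $L^{1}+L^{2}$ lies between its mean $1$ and its maximum $2$. On $[1,2]$ the defining equation reads $\tau(2-e)=(1-\tau)(2e-1)$, so $e_\tau(L^{1}+L^{2})=(1+\tau)/(2-\tau)$. Hence
\[
e_\tau(L^{1})+e_\tau(L^{2})-e_\tau(L^{1}+L^{2})
=\frac{2\tau}{1+\tau}-\frac{1}{2-\tau}
=\frac{(2\tau-1)(1-\tau)}{(1+\tau)(2-\tau)}.
\]
This is strictly positive on $(\tfrac12,1)$ and vanishes at $\tau=\tfrac12$, where $e_\tau$ is the mean. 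With this replacement for the single numerical check, your proof is complete.
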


Lemma~\ref{lem:nonrep} has a concrete methodological consequence. The tractability of Wasserstein distributionally robust mean-CVaR formulations rests on the Rockafellar--Uryasev representation
$\mathrm{CVaR}_{\alpha}(L)=\min_{q}\{q+\tfrac{1}{1-\alpha}\mathbb{E}(L-q)^{+}\}$
\citep{rockafellar2000optimization}, which renders CVaR an expectation of a piecewise linear integrand and therefore places the inner worst-case problem directly within the scope of Wasserstein duality \citep{mohajerin2018data,gao2023distributionally}. The expectile admits no
such representation: it is defined implicitly through \eqref{eq:expectile-def}, and its dual representation
\begin{equation}
e^{F}_{\tau}(L_{x})=\sup_{\varphi\in\Phi_{\tau}(F)}\;\mathbb{E}_{F}(\varphi L_{x}),
\qquad
\Phi_{\tau}(F)=\left\{\begin{aligned}
&\varphi\in L^{\infty}:\ \varphi\geq0,\ \mathbb{E}_{F}(\varphi)=1,\\
&(1-\tau)m\leq\varphi\leq\tau m\ \text{for some }m>0
\end{aligned}\right\},
\label{eq:expectile-dual}
\end{equation}
ranges over a feasible set that depends on the distribution $F$ itself, through both the normalization $\mathbb{E}_{F}(\varphi)=1$ and the free scale $m$ \citep{jakobsons2016scenario}. Under ambiguity, \eqref{eq:expectile-dual} yields a nested supremum $\sup_{F}\sup_{\varphi\in\Phi(F)}$ whose two levels are coupled. Resolving this coupling is the technical contribution of Section~\ref{sec:reformulation}.

\subsection{Wasserstein ambiguity sets and radius calibration}
\label{sec:ambiguity}

The true return distribution $F_t$ is unknown and must be inferred from a finite sample. Consequently, portfolio allocations based on a single estimated distribution may be sensitive to distributional misspecification. To account for this uncertainty, we consider a Wasserstein ambiguity set centered at the empirical distribution, following \citet{mohajerin2018data}.

\begin{definition}[Wasserstein metric; \citealp{villani2009optimal}]
On $M(\Xi)\times M(\Xi)$ the order-$p$ Wasserstein distance is
\begin{equation}
W_{p}(F_{1},F_{2})=\Bigl(\inf_{\pi\in M(\Xi^{2})}
\int_{\Xi^{2}}\norm{\xi_{1}-\xi_{2}}^{p}\,d\pi(\xi_{1},\xi_{2})\Bigr)^{1/p},
\label{eq:wasserstein}
\end{equation}
where $F_{1},F_{2}\in M(\Xi)$ are the marginals of $\xi_{1},\xi_{2}$ and $\pi$ is a coupling.
\label{def:wasserstein}
\end{definition}

We set $p=1$ throughout, which is what permits the exact reformulation of
Section~\ref{sec:reformulation}. The ambiguity set for period $t$ is
\begin{equation}
\mathcal{P}_{t}(F_{t,N_{t}})=\bigl\{F_{t}\in M(\Xi)\;:\;W_{1}(F_{t},F_{t,N_{t}})\leq\rho_{t}\bigr\},
\label{eq:ambiguityset}
\end{equation}
centered at the empirical measure
\begin{equation}
F_{t,N_{t}}=\frac{1}{N_{t}}\sum_{i=1}^{N_{t}}\delta_{\hat{\xi}_{t,i}},
\nonumber
\end{equation}
where $\delta$ denotes the Dirac measure, $\hat{\xi}_{t,i}$ the $i$th sampled return vector and $N_{t}$ the sample size. We impose the standing moment condition
\begin{equation}
\int_{\Xi}\norm{\xi_{t}}\,F_{t}(d\xi_{t})<\infty,
\label{eq:moment}
\end{equation}
under which the Wasserstein ball contains the true distribution with high probability for an appropriately chosen radius \citep{fournier2015rate,bolley2007quantitative}.

\begin{remark}
Under an $\ell_{\infty}$ ground metric, the associated dual norm is $\ell_{1}$. Since $\norm{x_{t}}_{1}\equiv1$ on the simplex \eqref{eq:simplex}, the term $\dnorm{x_t}$ becomes constant and the robust adjustment does not distinguish among feasible allocations. In contrast, under an $\ell_{1}$ ground metric, the associated dual norm is $\ell_{\infty}$ and $\norm{x_t}_{\infty}$ remains decision dependent. Propositions~\ref{prop:degeneracy} and~\ref{prop:norm} make the consequences precise, and Section~\ref{sec:h2} tests them.
\end{remark}

\subsection{The multi-period distributionally robust mean-expectile model}
\label{sec:trilevel}

Combining \eqref{eq:objective} with the ambiguity set \eqref{eq:ambiguityset} and a per-period minimum return requirement $r_{t}$ yields the model
\begin{align}
\min_{x}\;&\sum_{t=1}^{T}\left\{\max_{\beta_{t}\geq r_{t}}\;
\max_{F_{t}\in\mathcal{P}_{t}(F_{t,N_{t}})}
\Bigl[(1-\lambda)\,\mathbb{E}_{F_{t}}\bigl(-\xi_{t}^{\top}x_{t}\bigr)
+\lambda\,e^{F_{t}}_{\tau}\bigl(-\xi_{t}^{\top}x_{t}\bigr)\Bigr]\right\}
\label{eq:model-a}\\[2pt]
\text{s.t. }\;&\beta_{t}\geq r_{t}, &&t=1,\dots,T,
\label{eq:model-b}\\
&x_{t}\in\mathcal{X}, &&t=1,\dots,T,
\label{eq:model-c}
\end{align}
where the equality $\mathbb{E}_{F_{t}}(\xi_{t}^{\top}x_{t})=\beta_{t}$ is imposed within $\mathcal{P}_{t}(F_{t,N_{t}})$.

The outer minimization over $x_t$ represents the investor's portfolio allocation decision. The intermediate maximization over $\beta_t$ captures the requirement that the expected portfolio return should remain above the prescribed return target $r_t$. The adversary therefore selects the most unfavorable admissible value of $\beta_t$ consistent with the ambiguity set. The inner maximization over $F_{t}$ is the worst-case return distribution within the Wasserstein ball.

\begin{remark}
\label{rem:beta}
Since $e^{F}_{\tau}$ is not
an expectation, the supremum of the sum in \eqref{eq:model-a} does not in general equal the sum of the suprema. With the mean pinned at $-\beta_{t}$ on
$\mathcal{P}_{t}(\beta_{t}):=\mathcal{P}_{t}(F_{t,N_{t}})\cap\{F:\mathbb{E}_{F}(\xi_{t}^{\top}x_{t})=\beta_{t}\}$, however, the decomposition is exact:
\begin{equation}
\max_{F_{t}\in\mathcal{P}_{t}(\beta_{t})}
\Bigl[(1-\lambda)(-\beta_{t})+\lambda e^{F_{t}}_{\tau}\Bigr]
=-(1-\lambda)\beta_{t}+\lambda\max_{F_{t}\in\mathcal{P}_{t}(\beta_{t})}e^{F_{t}}_{\tau}
\bigl(-\xi_{t}^{\top}x_{t}\bigr).
\label{eq:split}
\end{equation}
The remainder of the derivation rests on \eqref{eq:split}.
\end{remark}

\begin{remark}[Limitations of the supremum-form representation]
\label{rem:supsup}
The supremum-form representation does not directly resolve the ambiguity problem for two reasons. First, the outer level of the model in Section~\ref{sec:trilevel} involves minimization with respect to $x_t$. Interchanging the relevant operators therefore leads to a $\min_x\max_{\gamma}$ formulation. This results in a semi-infinite optimization problem over a continuum of values of $\gamma$. Its solution generally requires either discretization or a cutting-plane procedure. Consequently, the resulting formulation no longer preserves the exact linear programming representation established in Theorem~\ref{thm:reformulation}. Second, the inner problem includes a mean equality constraint within the ambiguity ball as described in Section~\ref{sec:trilevel}. The closed-form characterization of worst-case used for the interchange in \citet{liu2022inf} is not directly applicable in the presence of this additional constraint. The envelope formulation developed in Section~\ref{sec:envelope} addresses these limitations by introducing a single scalar variable together with one binding constraint that characterizes its value.
\end{remark}

\section{Equivalent reformulation and solution design}
\label{sec:reformulation}
This section reformulates the tri--level program \eqref{eq:model-a}--\eqref{eq:model-c} to a finite family of programs solvable by standard solvers. The reformulation proceeds in four steps: a fixed-point characterization of the expectile (Section~\ref{sec:fixedpoint}); an envelope theorem converting the worst-case expectile into the root of a Wasserstein robust expectation problem (Section~\ref{sec:envelope}); Lagrangian dual of that expectation problem
(Section~\ref{sec:dual}); and resolution of the intermediate maximization
(Sections~\ref{sec:beta}--\ref{sec:theorem2}).

\subsection{A fixed-point characterization of the expectile}
\label{sec:fixedpoint}

\begin{lemma}[Fixed-point representation]
\label{lem:fixedpoint}
Let $\tau\in[\tfrac12,1)$ and $\kappa_{\tau}=(2\tau-1)/(1-\tau)$. For any $F$ with
$\mathbb{E}_{F}|L|<\infty$,
\begin{equation}
e^{F}_{\tau}(L)=\mathbb{E}_{F}[L]+\kappa_{\tau}\,\mathbb{E}_{F}\bigl[(L-e^{F}_{\tau}(L))^{+}\bigr].
\label{eq:fixedpoint}
\end{equation}
\end{lemma}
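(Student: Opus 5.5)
The plan is to start from the defining first-order condition \eqref{eq:expectile-def} and rewrite it using only the positive part. Write $e:=e^{F}_{\tau}(L)$, which exists and is unique by the definition since $\mathbb{E}_F|L|<\infty$. The key elementary identity is $(y)^{-}=(y)^{+}-y$ for every real $y$. Taking expectations with $y=L-e$ gives
\[
\mathbb{E}_{F}\bigl[(L-e)^{-}\bigr]=\mathbb{E}_{F}\bigl[(L-e)^{+}\bigr]-\mathbb{E}_{F}[L]+e .
\]
Both sides are finite because $L$ is integrable and $e$ is a constant.

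Next I substitute this into \eqref{eq:expectile-def}. This yields
\[
\tau\,\mathbb{E}_{F}\bigl[(L-e)^{+}\bigr]-(1-\tau)\Bigl(\mathbb{E}_{F}\bigl[(L-e)^{+}\bigr]-\mathbb{E}_{F}[L]+e\Bigr)=0,
\]
which simplifies to
\[
(2\tau-1)\,\mathbb{E}_{F}\bigl[(L-e)^{+}\bigr]+(1-\tau)\bigl(\mathbb{E}_{F}[L]-e\bigr)=0 .
\]
Since $\tau<1$, I can divide by $1-\tau>0$ and rearrange. This gives $e=\mathbb{E}_F[L]+\kappa_\tau\,\mathbb{E}_F[(L-e)^+]$, which is exactly \eqref{eq:fixedpoint}.

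There is no genuine obstacle here; the argument is a direct algebraic manipulation. The only points needing care are the integrability that justifies splitting expectations, and the requirement $\tau<1$ so the division is legitimate. The restriction $\tau\ge\tfrac12$ is not needed for the identity itself; it only ensures $\kappa_\tau\ge0$.

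I would add a brief remark that the map $z\mapsto \mathbb{E}_F[L]+\kappa_\tau\mathbb{E}_F[(L-z)^+]$ is nonincreasing in $z$. Hence $z$ minus this map is strictly increasing, so $e$ is its unique fixed point. This monotone fixed-point structure is what the later envelope argument (Theorem~\ref{thm:envelope}) will exploit.
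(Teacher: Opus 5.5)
Your proof is correct and follows the paper's argument essentially line for line: the paper also rewrites $\mathbb{E}_F(e-L)^+$ as $\mathbb{E}_F(L-e)^+-\mathbb{E}_F[L]+e$, substitutes this into the defining equation to obtain $(2\tau-1)\,\mathbb{E}_F(L-e)^+=(1-\tau)\bigl(e-\mathbb{E}_F[L]\bigr)$, and divides by $1-\tau>0$. Your closing uniqueness remark matches what the paper states right after the lemma, namely that $g(\cdot\,;F)$ is strictly decreasing with slope at most $-1$.
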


Lemma~\ref{lem:fixedpoint} converts the implicit definition \eqref{eq:expectile-def} into a scalar fixed-point equation whose right-hand side is an expectation of a piecewise linear function of the loss. Define
\begin{equation}
g(q;F):=\mathbb{E}_{F}[L]+\kappa_{\tau}\,\mathbb{E}_{F}(L-q)^{+}-q,\qquad q\in\mathbb{R}.
\nonumber
\end{equation}
Then $\partial g/\partial q=-\kappa_{\tau}\mathbb{P}_{F}(L>q)-1\leq-1<0$, so $g(\cdot;F)$ is strictly decreasing and $(1+\kappa_{\tau})$-Lipschitz, with unique root $e^{F}_{\tau}(L)$ by Lemma~\ref{lem:fixedpoint}.

\subsection{The worst-case expectile envelope theorem}
\label{sec:envelope}

\begin{theorem}[Worst-case expectile envelope]
\label{thm:envelope}
Let $\mathcal{P}$ be a set of probability measures on $\Xi$ with
$\sup_{F\in\mathcal{P}}\mathbb{E}_{F}\norm{\xi}<\infty$, and let $\tau\in(\tfrac12,1)$. Define
\begin{equation}
\Psi(q):=\sup_{F\in\mathcal{P}}\Bigl[\mathbb{E}_{F}[L]+\kappa_{\tau}\,\mathbb{E}_{F}(L-q)^{+}\Bigr]-q .
\label{eq:Psi}
\end{equation}
Then $\Psi$ is continuous and strictly decreasing on $\mathbb{R}$, possesses a unique root $q^{\star}$, and
\begin{equation}
\sup_{F\in\mathcal{P}}e^{F}_{\tau}(L)=q^{\star}=\min\bigl\{q\in\mathbb{R}:\Psi(q)\leq0\bigr\}.
\nonumber
\end{equation}
\end{theorem}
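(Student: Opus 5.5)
The plan is to work pointwise in $F$ through the function $g(q;F)$ of Section~\ref{sec:fixedpoint}, and then pass to the supremum. First, $\Psi$ is finite everywhere. Since $L=-\xi^{\top}x$, we have $|L|\le\dnorm{x}\norm{\xi}$. Hence $\mathbb{E}_F[L]+\kappa_\tau\mathbb{E}_F(L-q)^+\le(1+\kappa_\tau)\dnorm{x}\,\mathbb{E}_F\norm{\xi}+\kappa_\tau|q|$, and the hypothesis $\sup_{F\in\mathcal{P}}\mathbb{E}_F\norm{\xi}<\infty$ makes the supremum finite.

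For each $F$, write $h(q;F):=g(q;F)+q=\mathbb{E}_F[L]+\kappa_\tau\mathbb{E}_F(L-q)^+$. The map $q\mapsto h(q;F)$ is nonincreasing and $\kappa_\tau$-Lipschitz, because $q\mapsto(L-q)^+$ is nonincreasing and $1$-Lipschitz. A supremum of uniformly $\kappa_\tau$-Lipschitz nonincreasing functions that is finite is again $\kappa_\tau$-Lipschitz and nonincreasing. Therefore $\Psi(q)=\sup_F h(q;F)-q$ is continuous, and for $q<q'$ it satisfies $\Psi(q')-\Psi(q)\le -(q'-q)$. So $\Psi$ is strictly decreasing, with slope at most $-1$.

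This slope bound yields the limits $\Psi(q)\to-\infty$ as $q\to\infty$. For $q\to-\infty$, I would use $\Psi(q)\ge\sup_F\mathbb{E}_F[L]-q\ge -C-q\to\infty$. The intermediate value theorem then gives a unique root $q^\star$. Strict monotonicity together with continuity identifies $\{q:\Psi(q)\le0\}=[q^\star,\infty)$, so $q^\star$ is its minimum.

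It remains to show $\sup_F e^F_\tau(L)=q^\star$, and I would prove the two inequalities separately.

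For $\le$: fix $F\in\mathcal{P}$. By Lemma~\ref{lem:fixedpoint}, $g(e^F_\tau;F)=0$, so
\[
\Psi\bigl(e^F_\tau(L)\bigr)\ge h\bigl(e^F_\tau(L);F\bigr)-e^F_\tau(L)=0 .
\]
Since $\Psi$ is strictly decreasing, $e^F_\tau(L)\le q^\star$. Taking the supremum over $F$ gives $\sup_F e^F_\tau(L)\le q^\star$.

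For $\ge$: this is the step needing care, because the supremum in $\Psi$ need not be attained. Fix $\varepsilon>0$. Since $\Psi(q^\star)=0$, there is $F_\varepsilon\in\mathcal{P}$ with $g(q^\star;F_\varepsilon)\ge-\varepsilon$. The map $g(\cdot;F_\varepsilon)$ is strictly decreasing with slope at most $-1$ and vanishes at $e^{F_\varepsilon}_\tau(L)$. If $e^{F_\varepsilon}_\tau(L)<q^\star$, this gives
\[
-\varepsilon\le g(q^\star;F_\varepsilon)\le -(q^\star-e^{F_\varepsilon}_\tau(L)),
\]
so in every case $e^{F_\varepsilon}_\tau(L)\ge q^\star-\varepsilon$. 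Letting $\varepsilon\downarrow0$ gives $\sup_F e^F_\tau(L)\ge q^\star$.

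The main obstacle is the finiteness and uniform-Lipschitz bookkeeping that lets the pointwise root argument survive the supremum. This is where the moment hypothesis and the uniform slope bound $\le -1$ (rather than mere strict decrease) are essential.
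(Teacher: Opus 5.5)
Your proof is correct and follows the paper's argument: uniform slope bounds on $g(\cdot;F)$ passed through the supremum, finiteness from the moment hypothesis, and a two-sided identification of the root. The only variation is in the reverse inequality. You take an $\varepsilon$-maximizer at $q^\star$ and use the uniform slope $\le -1$ to get $e^{F_\varepsilon}_\tau(L)\ge q^\star-\varepsilon$, whereas the paper picks $q<q^\star$, uses $\Psi(q)>0$ to find $F$ with $g(q;F)>0$ and lets $q\uparrow q^\star$, which needs only strict monotonicity.
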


The detailed proof is provided in \ref{app:envelope}.

Theorem~\ref{thm:envelope} is the key step. It replaces the nested supremum
$\sup_{F}\sup_{\varphi\in\Phi(F)}$ arising from \eqref{eq:expectile-dual} by a single supremum over $F$ of an expectation, evaluated at a scalar fixed point. The following lemma shows that the resulting integrand has a two-piece structure analogous to that derived in \citet{wang2026multi}.

\begin{lemma}[Two-piece structure of the integrand]
\label{lem:twopiece}
For fixed $x_{t}\in\mathcal{X}$ and $q_{t}\in\mathbb{R}$, the integrand appearing in \eqref{eq:Psi} is
\begin{equation}
h_{t}(\xi_{t}):=-\xi_{t}^{\top}x_{t}+\kappa_{\tau}\bigl(-\xi_{t}^{\top}x_{t}-q_{t}\bigr)^{+}
=\max\bigl\{-k_{1}\xi_{t}^{\top}x_{t}+c_{1}q_{t},\;
-k_{2}\xi_{t}^{\top}x_{t}+c_{2}q_{t}\bigr\},
\label{eq:twopiece}
\end{equation}
with
\begin{equation}
k_{1}=1,\qquad k_{2}=1+\kappa_{\tau},\qquad c_{1}=0,\qquad c_{2}=-\kappa_{\tau},\qquad
\Delta:=k_{2}-k_{1}=\kappa_{\tau} .
\nonumber
\end{equation}
In particular $k_{2}>k_{1}\geq0$ for every $\tau\in(\tfrac12,1)$.
\end{lemma}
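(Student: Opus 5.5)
The identity \eqref{eq:twopiece} is pointwise in $\xi_t$, so the plan is to fix $x_t\in\mathcal{X}$, $q_t\in\mathbb{R}$ and $\xi_t$, and write $\ell:=-\xi_t^{\top}x_t$ for the realized loss. Because $\kappa_\tau\ge 0$, the elementary identity $\kappa_\tau a^{+}=\max\{0,\kappa_\tau a\}$ holds for every real $a$. Taking $a=\ell-q_t$ and adding $\ell$ to both entries of the maximum gives
\[
h_t(\xi_t)=\ell+\max\{0,\kappa_\tau(\ell-q_t)\}=\max\bigl\{\ell,\;(1+\kappa_\tau)\ell-\kappa_\tau q_t\bigr\}.
\]

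Substituting $\ell=-\xi_t^{\top}x_t$ back in shows that the first piece is $-k_1\xi_t^{\top}x_t+c_1q_t$ with $k_1=1$, $c_1=0$. The second piece is $-k_2\xi_t^{\top}x_t+c_2q_t$ with $k_2=1+\kappa_\tau$, $c_2=-\kappa_\tau$. This yields $\Delta=k_2-k_1=\kappa_\tau$.

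It remains to show the strict ordering. For $\tau\in(\tfrac12,1)$ we have $2\tau-1>0$ and $1-\tau>0$, so $\kappa_\tau>0$. Hence $k_2=1+\kappa_\tau>1=k_1\ge 0$.

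As a consistency check, the two pieces coincide exactly when $\kappa_\tau(\ell-q_t)=0$, that is, when $\ell=q_t$. The second piece is active when $\ell>q_t$, which matches the support of $(\ell-q_t)^{+}$. No step is a genuine obstacle. The only point needing care is the sign condition $\kappa_\tau\ge0$: it is what allows the positive part to be pulled inside the maximum, and it holds on the whole coherent range $\tau\ge\tfrac12$.
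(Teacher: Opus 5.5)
Your proof is correct and takes the same route as the paper, which obtains the identity directly from $(\cdot)^{+}=\max\{0,\cdot\}$ by splitting on the sign of $-\xi_t^{\top}x_t-q_t$. Your version adds one useful detail: it makes explicit that $\kappa_\tau\ge0$ is what lets the factor $\kappa_\tau$ pass inside the maximum.
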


\begin{remark} In the mean-CVaR inner problem of \citet{wang2026multi}, $k_{1}=1-\lambda$, $k_{2}=(1-\alpha+\lambda\alpha)/(1-\alpha)$ and $\Delta=\lambda/(1-\alpha)$. In contrast, the slope gap $\Delta$ in the present formulation is the endogenous constant $\kappa_{\tau}$ determined by the expectile level rather than the exogenous tail-inflation factor $1/(1-\alpha)$. Proposition~\ref{prop:sensitivity} shows that this distinction also affects the marginal cost of robustness.

\end{remark}

\subsection{Lagrangian dual of the inner maximization}
\label{sec:dual}
 
Fix $t$, $x_{t}$, $q_{t}$ and $\beta_{t}$. From Theorem~\ref{thm:envelope} and
Lemma~\ref{lem:twopiece}, the inner maximization problem can be written as
\begin{align}
\sup_{F_{t}}\;&\int_{\Xi}h_{t}(\xi_{t})\,F_{t}(d\xi_{t})
\label{eq:inner-a}\\
\text{s.t. }&\int_{\Xi}\xi_{t}^{\top}x_{t}\,F_{t}(d\xi_{t})=\beta_{t},
\label{eq:inner-b}\\
&W_{1}\bigl(F_{t},F_{t,N_{t}}\bigr)\leq\rho_{t}.
\label{eq:inner-c}
\end{align}
Introducing multipliers $\gamma_{t,1}\in\mathbb{R}$ for \eqref{eq:inner-b} and
$\gamma_{t,2}\geq0$ for \eqref{eq:inner-c}, and applying the dual-norm identity
$\dnorm{y}=\sup_{\norm{z}\leq1}z^{\top}y$ together with the minimax theorem, standard Wasserstein duality \citep{mohajerin2018data,gao2023distributionally,zhang2025short} yields
\begin{align}
D_{t}(q_{t},\beta_{t})=\min_{\gamma_{t,1},\gamma_{t,2},s}\;&
\gamma_{t,1}\beta_{t}+\gamma_{t,2}\rho_{t}+\frac{1}{N_{t}}\sum_{i=1}^{N_{t}}s_{t,i}
\label{eq:dual-a}\\
\text{s.t. }&s_{t,i}\geq c_{1}q_{t}-(k_{1}+\gamma_{t,1})\hat{\xi}_{t,i}^{\top}x_{t},
&&i=1,\dots,N_{t},
\label{eq:dual-b}\\
&s_{t,i}\geq c_{2}q_{t}-(k_{2}+\gamma_{t,1})\hat{\xi}_{t,i}^{\top}x_{t},
&&i=1,\dots,N_{t},
\label{eq:dual-c}\\
&\dnorm{(k_{1}+\gamma_{t,1})x_{t}}\leq\gamma_{t,2},\qquad
\dnorm{(k_{2}+\gamma_{t,1})x_{t}}\leq\gamma_{t,2},\qquad\gamma_{t,2}\geq0,
\label{eq:dual-d}
\end{align}
with $(k_{1},k_{2},c_{1},c_{2})=(1,1+\kappa_{\tau},0,-\kappa_{\tau})$ from Lemma~\ref{lem:twopiece} and
$\hat{\mu}_{t}=N_{t}^{-1}\sum_{i}\hat{\xi}_{t,i}$ the sample mean return vector.
 
Since $\xi\mapsto\xi^{\top}x_{t}$ is Lipschitz with modulus $\dnorm{x_{t}}$, Kantorovich duality gives
$|\mathbb{E}_{F_{t}}(\xi_{t}^{\top}x_{t})-\hat{\mu}_{t}^{\top}x_{t}|\leq\rho_{t}\dnorm{x_{t}}$.
Hence the constraints \eqref{eq:inner-b}--\eqref{eq:inner-c} are jointly feasible if and only if $\beta_{t}\in[\hat{\mu}_{t}^{\top}x_{t}-\rho_{t}\dnorm{x_{t}},\,
\hat{\mu}_{t}^{\top}x_{t}+\rho_{t}\dnorm{x_{t}}]$. For $\beta_t$ in the interior of this interval the Slater condition holds and therefore strong duality applies.
 
By the choice of ground metric made in Section~\ref{sec:ambiguity} the dual norm is $\dnorm{\cdot}=\norm{\cdot}_{\infty}$, so $\dnorm{x_{t}}$ is decision dependent and does not reduce to a constant on the simplex.
 
Define $w_{t}:=\norm{x_{t}}_{\infty}$ and set
$a_{t,1}:=k_{1}+\gamma_{t,1}=1+\gamma_{t,1},$
conditions \eqref{eq:dual-d} reduces to $|a_{t,1}|\,w_{t}\leq\gamma_{t,2}$ and
$|a_{t,1}+\kappa_{\tau}|\,w_{t}\leq\gamma_{t,2}.$ 

The dual problem can therefore be expressed as
\begin{align}
D_{t}(q_{t},\beta_{t})=\min_{a_{t,1},\gamma_{t,2},s}\;&
(a_{t,1}-1)\beta_{t}+\gamma_{t,2}\rho_{t}+\frac{1}{N_{t}}\sum_{i=1}^{N_{t}}s_{t,i}
\label{eq:dual2-a}\\
\text{s.t. }&s_{t,i}\geq-a_{t,1}\hat{\xi}_{t,i}^{\top}x_{t},
&&i=1,\dots,N_{t},
\label{eq:dual2-b}\\
&s_{t,i}\geq-\kappa_{\tau} q_{t}-(a_{t,1}+\kappa_{\tau})\hat{\xi}_{t,i}^{\top}x_{t},
&&i=1,\dots,N_{t},
\label{eq:dual2-c}\\
&|a_{t,1}|\,w_{t}\leq\gamma_{t,2},\qquad|a_{t,1}+\kappa_{\tau}|\,w_{t}\leq\gamma_{t,2}.
\label{eq:dual2-d}
\end{align}
Because $\rho_{t}>0$ the multiplier binds, and because $w_{t}>0$ it factorises out of
\eqref{eq:dual2-d}:
\begin{equation}
\gamma_{t,2}^{\star}=\gamma_{t,2}^{0}(a_{t,1})\,w_{t},
\qquad
\gamma_{t,2}^{0}(a):=\max\bigl\{|a|,|a+\kappa_{\tau}|\bigr\}
=\begin{cases}
a+\kappa_{\tau}, & a\geq-\kappa_{\tau}/2,\\[2pt]
-a,    & a<-\kappa_{\tau}/2.
\end{cases}
\label{eq:gamma2}
\end{equation}
 
\begin{remark}
\label{rem:inherit}
The function $\gamma_{t,2}^{0}$ in \eqref{eq:gamma2} will coincide with that arising under an $\ell_{\infty}$ ground metric. The change in the ground metric affects the expression only through the factor $w_{t}$, thus analysis of Section~\ref{sec:beta} is unaffected by the choice of ground metric.
\end{remark}
 
\subsection{Resolution of the intermediate maximization}
\label{sec:beta}
 
By Theorem~\ref{thm:envelope}, the worst-case expectile is the minimum $q_{t}$
satisfying $D_{t}(q_{t},\beta_{t})\leq q_{t}$ with equality at the optimum. The period-$t$ value is $V_{t}(\beta_{t})=-(1-\lambda)\beta_{t}+\lambda q_{t}(\beta_{t})$, maximized over
\begin{equation}
\beta_{t}\in[\beta_{t,L},\beta_{t,U}],
\qquad
\beta_{t,L}=\max\bigl\{r_{t},\,\hat{\mu}_{t}^{\top}x_{t}-\rho_{t}w_{t}\bigr\},
\qquad
\beta_{t,U}=\hat{\mu}_{t}^{\top}x_{t}+\rho_{t}w_{t}.
\label{eq:betabounds}
\end{equation}
 
\begin{remark}[The dependence on $\beta_{t}$ is not affine]
\label{rem:slope}
Differentiating $q=D_{t}(q,\beta_{t})$ implicitly, and using the envelope identity $\partial D_{t}/\partial\beta=\gamma_{t,1}
=a_{t,1}-1$ together with $\partial D_{t}/\partial q=-\kappa_{\tau} P$, where
$P:=\mathbb{P}_{F^{\star}}(L_{x_{t}}>q_{t})$ is evaluated under the worst-case measure, gives
\begin{equation}
\frac{dq_{t}}{d\beta_{t}}=\frac{a_{t,1}-1}{1+\kappa_{\tau} P},
\qquad
\frac{dV_{t}}{d\beta_{t}}=-(1-\lambda)+\frac{\lambda\,(a_{t,1}-1)}{1+\kappa_{\tau} P}.
\label{eq:slope}
\end{equation}
The affine expression is recovered only in the degenerate case $\kappa_{\tau} P=0$.
\end{remark}

\begin{lemma}[Sign of the multiplier and selection of $\beta_{t}$]
\label{lem:betaL}
The function $D_{t}(q,\cdot)$ is concave and non-increasing on
$[\beta_{t,L},\beta_{t,U}]$. Consequently $\gamma_{t,1}=\partial D_{t}/\partial\beta\leq0$, so $a_{t,1}\leq1$, and by \eqref{eq:slope}
\begin{equation}
\frac{dV_{t}}{d\beta_{t}}\ \leq\ -(1-\lambda)\ <\ 0
\qquad\text{for every admissible }a_{t,1}.
\label{eq:slope-negative}
\end{equation}
The intermediate maximization is therefore attained at $\beta_{t}^{\star}=\beta_{t,L}$ for all parameter values.
\end{lemma}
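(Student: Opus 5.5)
The plan is to prove concavity and monotonicity of $\beta\mapsto D_{t}(q,\beta)$ on the primal side, which needs no differentiability. The sign of $\gamma_{t,1}$ and the slope bound \eqref{eq:slope-negative} then follow from \eqref{eq:slope}. Throughout, $D_{t}(q,\beta)$ is read as the value of \eqref{eq:inner-a}--\eqref{eq:inner-c}, which equals the dual \eqref{eq:dual-a}--\eqref{eq:dual-d} by the strong duality noted above.

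\emph{Concavity.} Take $\beta,\beta'$ in the feasible interval, $\theta\in[0,1]$, and $\varepsilon$-optimal measures $F,F'$ for $\beta,\beta'$. The mixture $\theta F+(1-\theta)F'$ has mean return $\theta\beta+(1-\theta)\beta'$, because the constraint \eqref{eq:inner-b} is linear in the measure. It also lies in the Wasserstein ball, because $W_{1}(\cdot,F_{t,N_{t}})$ is convex: mix the two optimal couplings. The objective \eqref{eq:inner-a} is linear in the measure, so
\[
D_{t}(q,\theta\beta+(1-\theta)\beta')\geq\theta D_{t}(q,\beta)+(1-\theta)D_{t}(q,\beta')-\varepsilon .
\]
Letting $\varepsilon\to0$ gives concavity. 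The dual gives the same conclusion, since it is a pointwise minimum of functions affine in $\beta$.

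\emph{Monotonicity.} A concave function on an interval is non-increasing to the right of any point where it attains its maximum. It therefore suffices to show that the maximum over the full feasible interval is attained at the left endpoint $\beta_{\min}:=\hat{\mu}_{t}^{\top}x_{t}-\rho_{t}w_{t}$. Since $\beta_{t,L}\geq\beta_{\min}$, this yields monotonicity on $[\beta_{t,L},\beta_{t,U}]$.

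\begin{enumerate}
\item The maximum of $D_{t}(q,\cdot)$ over all feasible $\beta$ equals the mean-unconstrained worst case $\sup_{W_{1}(F,F_{t,N_{t}})\leq\rho_{t}}\mathbb{E}_{F}h_{t}$.
\item By Lemma~\ref{lem:twopiece}, $h_{t}$ is a convex, nondecreasing function of the loss $L=-\xi^{\top}x_{t}$, with slopes $k_{1}=1$ and $k_{2}=1+\kappa_{\tau}$, both positive.
\item Fix a direction $d$ with $\norm{d}=1$ and $d^{\top}x_{t}=w_{t}$; it exists by the dual-norm identity. Consider perturbations that transport mass from the atoms $\hat{\xi}_{t,i}$ along $-d$, which is the direction of increasing loss. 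Because $h_{t}$ is nondecreasing in $L$, only movements along $-d$ help. Such perturbations spend the entire budget $\rho_{t}$ and lower the mean return by exactly $\rho_{t}w_{t}$.
\item It follows that the unconstrained supremum is achieved, or approached by a sequence that moves a vanishing mass to distance $\rho_{t}N_{t}$ or further, at distributions with mean return $\beta_{\min}$.
\end{enumerate}

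This step is where I expect the main obstacle. The supremum may not be attained: it can be approached only by sending a vanishing mass far out. The careful argument is to show that any feasible $F$ with mean $\beta>\beta_{\min}$ can be modified, by redirecting the component of its transport plan that is orthogonal to the loss direction, into a feasible measure with mean return no larger and value no smaller. Failing that, I would show it in the dual: at $\beta=\beta_{\min}$, exhibit a dual certificate with $a_{t,1}\leq1$ whose value matches the primal perturbation $\frac{1}{N_{t}}\sum_{i}h_{t}(\hat{\xi}_{t,i})+\rho_{t}k_{2}w_{t}$.

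\emph{Conclusion.} Once concavity and monotonicity hold, every supergradient of $D_{t}(q,\cdot)$ on the interval is nonpositive. By the envelope identity, the optimal multiplier satisfies $\gamma_{t,1}=\partial D_{t}/\partial\beta\leq0$, so $a_{t,1}=1+\gamma_{t,1}\leq1$. In \eqref{eq:slope}, $P\in[0,1]$ and $\kappa_{\tau}\geq0$, so $1+\kappa_{\tau}P\geq1>0$. Hence $\lambda(a_{t,1}-1)/(1+\kappa_{\tau}P)\leq0$, which gives $dV_{t}/d\beta_{t}\leq-(1-\lambda)<0$ because $\lambda<1$.

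At points where $D_{t}$ is not differentiable, I would rerun the argument with one-sided derivatives, or argue directly as follows. Since $D_{t}(q,\cdot)$ is non-increasing and, by Theorem~\ref{thm:envelope}, the root map $q\mapsto q-D_{t}(q,\beta)$ is strictly increasing, the root $q_{t}(\beta)$ is non-increasing in $\beta$. Then $V_{t}(\beta)=-(1-\lambda)\beta+\lambda q_{t}(\beta)$ is strictly decreasing, and its maximum over $[\beta_{t,L},\beta_{t,U}]$ is attained at $\beta_{t}^{\star}=\beta_{t,L}$.
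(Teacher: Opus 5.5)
Your proposal is correct and follows the paper's proof. Concavity of $D_{t}(q,\cdot)$ comes from linearity of the inner problem in $F_{t}$; the mean-unconstrained worst case spends the whole budget raising the loss at rate $w_{t}$, so $D_{t}(q,\cdot)$ peaks at $\hat{\mu}_{t}^{\top}x_{t}-\rho_{t}w_{t}\leq\beta_{t,L}$; and \eqref{eq:slope} then finishes.

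The step you flag is not a real obstacle: your vanishing-mass sequence already has mean exactly $\hat{\mu}_{t}^{\top}x_{t}-\rho_{t}w_{t}$, and the Lipschitz modulus $(1+\kappa_{\tau})w_{t}$ of $h_{t}$ bounds $D_{t}(q,\beta)$ above for every $\beta$, so attainment is irrelevant. Your derivative-free finish via the monotone root $q_{t}(\beta)$ is also a useful tightening, because at the left endpoint in Case~A the superdifferential of $D_{t}(q,\cdot)$ contains positive values, so your claim that every supergradient is nonpositive fails there.
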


By Lemma~\ref{lem:betaL} the upper endpoint $\beta_{t,U}$ is never selected, so the branches arising from it do not occur. What remains is the kink of $\gamma_{t,2}^{0}$ at $a_{t,1}=-\kappa_{\tau}/2$, crossed with the two
configurations of $\beta_{t,L}$ in \eqref{eq:betabounds},
\begin{align}
\text{Case A:}\quad &\hat{\mu}_{t}^{\top}x_{t}-\rho_{t}w_{t}\geq r_{t}
&&\Longrightarrow\quad\beta_{t,L}=\hat{\mu}_{t}^{\top}x_{t}-\rho_{t}w_{t},
\label{eq:caseA}\\
\text{Case B:}\quad &\hat{\mu}_{t}^{\top}x_{t}-\rho_{t}w_{t}<r_{t}
\leq\hat{\mu}_{t}^{\top}x_{t}+\rho_{t}w_{t}
&&\Longrightarrow\quad\beta_{t,L}=r_{t},
\label{eq:caseB}
\end{align}
giving four mutually exclusive and collectively exhaustive branches. Substituting
\eqref{eq:caseA}--\eqref{eq:caseB} and \eqref{eq:gamma2} into the binding fixed point:
\begin{align}
\text{Case A:}\quad
q_{t}&\geq(a_{t,1}-1)\hat{\mu}_{t}^{\top}x_{t}
+\bigl[\gamma_{t,2}^{0}(a_{t,1})-(a_{t,1}-1)\bigr]\rho_{t}w_{t}
+\frac{1}{N_{t}}\sum_{i=1}^{N_{t}}s_{t,i},
\label{eq:R1A}\\
\text{Case B:}\quad
q_{t}&\geq(a_{t,1}-1)r_{t}+\gamma_{t,2}^{0}(a_{t,1})\rho_{t}w_{t}
+\frac{1}{N_{t}}\sum_{i=1}^{N_{t}}s_{t,i}.
\label{eq:R1B}
\end{align}
The coefficient on $w_{t}$ in \eqref{eq:R1A} is strictly positive, since
$\gamma_{t,2}^{0}(a)\geq|a|\geq a>a-1$.

\subsection{Equivalent reformulation}
\label{sec:theorem2}
 
\begin{theorem}[Equivalent reformulation]
\label{thm:reformulation}
Let $\tau\in(\tfrac12,1)$, $\lambda\in(0,1)$, $\rho_{t}>0$, and let $F_{t}\in M(\Xi)$ satisfy \eqref{eq:moment}. The optimal value of Model
\eqref{eq:model-a}--\eqref{eq:model-c} equals the minimum of the optimal values of the four programs \emph{EX-A1}, \emph{EX-A2}, \emph{EX-B1} and \emph{EX-B2} stated below, and the corresponding multi-period optimal allocations are obtained from their solutions. Each program is linear in $(x_{t},q_{t},s_{t,i},w_{t})$ for fixed $a_{t,1}$.
\end{theorem}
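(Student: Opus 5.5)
The proof assembles the preceding results period by period, using separability so that the multi-period objective is a sum of independent period-$t$ problems. By Lemma~\ref{lem:timeconsistency}, the outer minimization decouples over $t$. It therefore suffices to show that, for each $t$ and fixed $x_t\in\mathcal{X}$, the adversarial value
\[
\max_{\beta_t\ge r_t}\max_{F_t\in\mathcal{P}_t(\beta_t)}\bigl[(1-\lambda)\mathbb{E}_{F_t}(-\xi_t^\top x_t)+\lambda e^{F_t}_\tau(-\xi_t^\top x_t)\bigr]
\]
equals the optimal value of the relevant program among EX-A1, EX-A2, EX-B1, EX-B2. Here "relevant" means the one determined by Case A/B and by the side of the kink $a_{t,1}\gtrless-\kappa_\tau/2$. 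Minimizing over $x_t$ and then over the four programs then yields the global value. Summing over $t$ preserves the identity because the four-way choice can be made per period. I would make explicit that, without a binary, the minimum of the four programs is taken jointly with feasibility constraints encoding each branch: $\hat\mu_t^\top x_t-\rho_tw_t\ge r_t$ for A, the reverse inequality for B, and $a_{t,1}\ge-\kappa_\tau/2$ or $<-\kappa_\tau/2$. These constraints make the branches mutually exclusive and exhaustive.

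For fixed $x_t$ and $\beta_t$, the chain is as follows. First, Remark~\ref{rem:beta} splits the objective into $-(1-\lambda)\beta_t+\lambda\sup_{F}e^F_\tau$. Next, Theorem~\ref{thm:envelope} applied with $\mathcal{P}=\mathcal{P}_t(\beta_t)$ expresses the supremum as the smallest $q_t$ with $\Psi(q_t)\le0$; the moment hypothesis follows from \eqref{eq:moment} and the Wasserstein bound. Lemma~\ref{lem:twopiece} writes the integrand as a two-piece maximum. Strong duality gives $\sup_F\mathbb{E}_F h_t=D_t(q_t,\beta_t)$, valid for $\beta_t$ in the interior of the feasible interval. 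The endpoint $\beta_{t,L}=\hat\mu_t^\top x_t-\rho_tw_t$ must be handled by a limiting argument, using the continuity of $D_t$ in $\beta$ from concavity (Lemma~\ref{lem:betaL}) and the Lipschitz bound on $\Psi$. Hence $q_t^\star(\beta_t)=\min\{q: D_t(q,\beta_t)\le q\}$. Because $D_t$ is itself a minimum over $(a_{t,1},\gamma_{t,2},s)$, the condition $D_t\le q$ is equivalent to the existence of dual variables satisfying \eqref{eq:dual2-b}--\eqref{eq:dual2-d} together with the epigraph inequality. This converts the root-finding into a minimization over $q_t$ jointly with the dual variables.

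Lemma~\ref{lem:betaL} then fixes $\beta_t^\star=\beta_{t,L}$, which gives Case A or Case B. Using \eqref{eq:gamma2}, $\gamma_{t,2}$ is eliminated, producing the constraints \eqref{eq:R1A} or \eqref{eq:R1B} with $\gamma^0_{t,2}$ affine on each side of the kink. The period value $-(1-\lambda)\beta_{t,L}+\lambda q_t$ then becomes the objective of the four programs. Finally, $w_t=\|x_t\|_\infty$ is linearized by $w_t\ge x_{t,j}$ for all $j$, which is exact when $w_t$ enters with a positive coefficient. This positivity holds in Case A by the remark after \eqref{eq:R1B}, and in Case B because $\gamma^0_{t,2}\ge0$. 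The resulting constraints are linear in $(x_t,q_t,s_{t,i},w_t)$ for fixed $a_{t,1}$, since the only bilinear terms $a_{t,1}\hat\xi_{t,i}^\top x_t$ and $a_{t,1}\hat\mu_t^\top x_t$ become linear. The constraint count is $O(N)$, consisting of the $2N_t$ hinge constraints plus $n$ constraints for $w_t$.

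The main obstacle is the exchange of the outer $\min$ over $x_t$ with the nested $\max_\beta$ and the epigraph minimum. The epigraph reformulation $q\ge D_t$ turns a max–min into a single min, but $\beta_{t,L}$ itself depends on $x_t$ through a max, and that max creates the A/B split. I would justify this step by noting that Lemma~\ref{lem:betaL} holds for every $x_t$. Hence, for each $x_t$, exactly one branch is active, and its program value at that $x_t$ is $\ge$ the true value in the other branch's region only where feasible. Taking the minimum over branches with branch-defining constraints therefore recovers the exact value. Attainment of the minimum also needs checking: I would use compactness of $\mathcal{X}$ together with boundedness of $a_{t,1}\le1$ from below, which follows from coercivity of $\gamma^0_{t,2}\rho_tw_t$ since $w_t\ge1/n$.
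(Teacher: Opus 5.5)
Your chain of reductions is the paper's proof essentially step for step. It runs through separability, the split of Remark~\ref{rem:beta}, Theorem~\ref{thm:envelope}, the two-piece integrand, Wasserstein duality with a limiting argument at $\beta_{t,L}$, the epigraph encoding of the fixed point, Lemma~\ref{lem:betaL}, and elimination of $\gamma_{t,2}$ via \eqref{eq:gamma2}.

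The gap is in the last step. You claim that $w_t\ge x_{t,j}$ is exact in Case~B ``because $\gamma^0_{t,2}\ge0$'', and that the branch-defining constraints make the four programs mutually exclusive. Neither holds. The case conditions are imposed on $w_t$, not on $\norm{x_t}_\infty$, and both inequalities in \eqref{eq:EXB-c} become easier to satisfy as $w_t$ grows. So EX-B1 and EX-B2 are feasible at allocations outside the Case~B region, with $w_t$ as large as $|\hat{\mu}_t^{\top}x_t-r_t|/\rho_t>\norm{x_t}_\infty$; Remark~\ref{rem:w} records exactly this. Positivity of $\gamma^0_{t,2}$ only shows that $w_t=\norm{x_t}_\infty$ is optimal where that value already satisfies \eqref{eq:EXB-c}.

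Leakage through the first inequality occurs at points with $\beta_A:=\hat{\mu}_t^{\top}x_t-\rho_t\norm{x_t}_\infty\ge r_t$. For these you need the conservatism argument that the paper asserts in Remark~\ref{rem:w}, written out:
\begin{itemize}
\item If $(q_t,a_{t,1},s,w_t)$ is feasible for EX-B$k$ at such an $x_t$, then $(q_t,a_{t,1},s,\norm{x_t}_\infty)$ is feasible for EX-A$k$.
\item The EX-A fixed-point right-hand side is $(a_{t,1}-1)\beta_A+\gamma^0_{t,2}(a_{t,1})\rho_t\norm{x_t}_\infty+N_t^{-1}\sum_i s_{t,i}$. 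It is dominated by the EX-B one, since $a_{t,1}\le1$, $r_t\le\beta_A$, $\gamma^0_{t,2}\ge0$ and $w_t\ge\norm{x_t}_\infty$.
\item The $s$-constraints involve neither $w_t$ nor $r_t$, and $-(1-\lambda)\beta_A\le-(1-\lambda)r_t$.
\end{itemize}
Your sentence about branch values being ``$\ge$ the true value \ldots\ only where feasible'' gestures at this but does not prove it.

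Leakage through the second inequality of \eqref{eq:EXB-c} cannot be repaired this way, and the paper's ``conservative elsewhere'' in Remark~\ref{rem:w} is unjustified there too. At an $x_t$ with $\hat{\mu}_t^{\top}x_t+\rho_t\norm{x_t}_\infty<r_t$ the model has no admissible $\beta_t$ at all, yet EX-B gives it a finite value. Here is a concrete instance.
\begin{itemize}
\item Take $T=1$, $n=2$, asset~1 with all sample returns $0$, asset~2 with samples $\mu_2\pm D$ in equal proportion, and $r_t=\mu_2/2>\rho_t$.
\item At $x_t=(1,0)^{\top}$, the point $w_t=r_t/\rho_t$, $a_{t,1}=-\kappa_\tau/2$, $q_t=-r_t/(1+\kappa_\tau)$, $s_{t,i}=\kappa_\tau r_t/(1+\kappa_\tau)$ is feasible for EX-B1. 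Its objective is $-(1-\lambda)r_t-\lambda r_t/(1+\kappa_\tau)<0$.
\item Every genuinely feasible allocation has $x_{t,2}\ge\theta_0:=(\mu_2/2-\rho_t)/(\mu_2-\rho_t)>0$.
\item The expectile is $(1+\kappa_\tau)$-Lipschitz in $W_1$ of the loss law: the integrand of $g$ is $(1+\kappa_\tau)$-Lipschitz in $L$, and $g$ decreases at rate at least one.
\item Hence every feasible allocation has model value at least $-(\mu_2+\rho_t)-\lambda(1+\kappa_\tau)\rho_t+\lambda(2\tau-1)\theta_0 D$, which is positive for large $D$.
\end{itemize}
So the stated equality of optimal values fails for the programs as written. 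A correct proof must enforce $\hat{\mu}_t^{\top}x_t+\rho_t\norm{x_t}_\infty\ge r_t$ with the true norm. One option is the disjunction ``$\hat{\mu}_t^{\top}x_t+\rho_t x_{t,j}\ge r_t$ for some $j$'', which yields $n$ linear sub-branches. The other is to assume the condition holds on all of $\mathcal{X}$.

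Two smaller points.
\begin{itemize}
\item Concavity gives continuity of $D_t(q,\cdot)$ only in the interior of its domain, so the endpoint $\beta_{t,L}$ needs a direct primal argument.
\item In EX-B2 the coefficient of $|a_{t,1}|$ as $a_{t,1}\to-\infty$ is $\hat{\mu}_t^{\top}x_t+\rho_tw_t-r_t$. This vanishes on the boundary of Case~B, so $w_t\ge1/n$ does not bound $a_{t,1}$ there.
\end{itemize}
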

 
\paragraph{Case A ($\hat{\mu}_{t}^{\top}x_{t}-\rho_{t}w_{t}\geq r_{t}$)}
 
\emph{EX-A1} ($-\kappa_{\tau}/2\leq a_{t,1}\leq1$, hence $\gamma_{t,2}^{0}=a_{t,1}+\kappa_{\tau}$):
\begin{align}
\min_{x_{t},q_{t},a_{t,1},s_{t,i},w_{t}}\;
&\sum_{t=1}^{T}\Bigl[-(1-\lambda)\bigl(\hat{\mu}_{t}^{\top}x_{t}-\rho_{t}w_{t}\bigr)
+\lambda q_{t}\Bigr]
\label{eq:EXA1-a}\\
\text{s.t. }
&q_{t}\geq(a_{t,1}-1)\hat{\mu}_{t}^{\top}x_{t}
+\bigl(\kappa_{\tau}+1\bigr)\rho_{t}w_{t}
+\frac{1}{N_{t}}\sum_{i=1}^{N_{t}}s_{t,i},
\label{eq:EXA1-b}\\
&s_{t,i}\geq-a_{t,1}\hat{\xi}_{t,i}^{\top}x_{t},
&&\hspace{-2.4cm}i=1,\dots,N_{t},
\label{eq:EXA1-c}\\
&s_{t,i}\geq-\kappa_{\tau} q_{t}-(a_{t,1}+\kappa_{\tau})\hat{\xi}_{t,i}^{\top}x_{t},
&&\hspace{-2.4cm}i=1,\dots,N_{t},
\label{eq:EXA1-d}\\
&\hat{\mu}_{t}^{\top}x_{t}-\rho_{t}w_{t}\geq r_{t},
\label{eq:EXA1-e}\\
&w_{t}\geq x_{t,j},
&&\hspace{-2.4cm}j=1,\dots,n,
\label{eq:EXA1-f}\\
&-\kappa_{\tau}/2\leq a_{t,1}\leq1,
\label{eq:EXA1-g}\\
&x_{t}\geq0,\qquad e^{\top}x_{t}=1,\qquad w_{t}\geq0.
\label{eq:EXA1-h}
\end{align}
The coefficient $(\kappa_{\tau}+1)$ in \eqref{eq:EXA1-b} is
$\gamma_{t,2}^{0}(a_{t,1})-(a_{t,1}-1)=(a_{t,1}+\kappa_{\tau})-(a_{t,1}-1)$, which is independent of $a_{t,1}$ on this regime.
 
\emph{EX-A2} ($a_{t,1}<-\kappa_{\tau}/2$, hence $\gamma_{t,2}^{0}=-a_{t,1}$): as \emph{EX-A1} with
\eqref{eq:EXA1-b} replaced by
\begin{equation}
q_{t}\geq(a_{t,1}-1)\hat{\mu}_{t}^{\top}x_{t}
+\bigl(1-2a_{t,1}\bigr)\rho_{t}w_{t}
+\frac{1}{N_{t}}\sum_{i=1}^{N_{t}}s_{t,i},
\nonumber
\end{equation}
and \eqref{eq:EXA1-g} replaced by $a_{t,1}<-\kappa_{\tau}/2$; constraints
\eqref{eq:EXA1-a}, \eqref{eq:EXA1-c}--\eqref{eq:EXA1-f} and \eqref{eq:EXA1-h} are
unchanged.
 
\paragraph{Case B ($\hat{\mu}_{t}^{\top}x_{t}-\rho_{t}w_{t}<r_{t}
\leq\hat{\mu}_{t}^{\top}x_{t}+\rho_{t}w_{t}$)}
 
\emph{EX-B1} ($-\kappa_{\tau}/2\leq a_{t,1}\leq1$) and \emph{EX-B2} ($a_{t,1}<-\kappa_{\tau}/2$) are
obtained from \emph{EX-A1} and \emph{EX-A2} respectively by replacing the objective
\eqref{eq:EXA1-a} with
\begin{equation}
\min\;\sum_{t=1}^{T}\bigl[-(1-\lambda)r_{t}+\lambda q_{t}\bigr],
\nonumber
\end{equation}
the fixed-point constraint with
\begin{equation}
q_{t}\geq(a_{t,1}-1)r_{t}+\gamma_{t,2}^{0}(a_{t,1})\rho_{t}w_{t}
+\frac{1}{N_{t}}\sum_{i=1}^{N_{t}}s_{t,i},
\label{eq:EXB-b}
\end{equation}
and the case condition \eqref{eq:EXA1-e} with the pair
\begin{equation}
\hat{\mu}_{t}^{\top}x_{t}-\rho_{t}w_{t}\leq r_{t},
\qquad
\hat{\mu}_{t}^{\top}x_{t}+\rho_{t}w_{t}\geq r_{t}.
\label{eq:EXB-c}
\end{equation}
 
The proof of Theorem~\ref{thm:reformulation} is given in
Appendix~B; it verifies that \eqref{eq:gamma2}, \eqref{eq:caseA}--\eqref{eq:caseB}
exhaust the parameter space and that the substitutions of $\gamma_{t,2}^{\star}$ and
$\beta_{t}^{\star}$ are exact within each branch.
 
\begin{remark}[Exactness of the linearization of $\|x_t\|_\infty$]\label{rem:w}
The constraints \eqref{eq:EXA1-f} impose only $w_t\ge x_{t,i}$ and leave $w_t$ free to exceed $\|x_t\|_\infty$. In Case~A this is immaterial: the objective \eqref{eq:EXA1-a}, the constraints \eqref{eq:EXA1-b} and \eqref{eq:EXA1-e} penalize larger $w_t$, so the constraints bind and $w_t^\star=\|x_t\|_\infty$. In Case~B the constraints \eqref{eq:EXB-c} increases $w_t$
and
\(
w_t^\star=\max\Bigl\{\|x_t\|_\infty,\ \bigl|\hat{\mu}_t^{\top}x_t-r_t\bigr|/\rho_t\Bigr\},
\)
with equality to $\|x_t\|_\infty$ precisely on the region where the Case~B condition \eqref{eq:caseB} holds. Outside that region the large $w_t$ increases $q_t$ through \eqref{eq:EXB-b} and so penalizes the allocation. Each branch is thus exact on its own case region and conservative elsewhere, which is all the minimization over the four branches requires.
\end{remark}

\begin{remark}[Fixed point without bisection]\label{rem:no-bisection}
The constraints \eqref{eq:EXA1-b} and \eqref{eq:EXB-b} represents the fixed point of Theorem~\ref{thm:envelope} as inequalities rather than equalities. This is exact because $q_t$ enters the objective with the strictly positive coefficient $\lambda$. Hence, the minimization selects the smallest feasible value of $q_t$, at which the corresponding constraint holds with equality. This value coincides with the fixed point $q_t^\star$ and therefore no outer bisection procedure is required. Since $q_t$ appears in
\eqref{eq:EXA1-d} only with coefficient $-\kappa_\tau$, the system remains linear in $(q_t,s_{t,i})$.
\end{remark}

\begin{remark}[The trade-off weight is inoperative in Case~B]\label{rem:caseB}
In Case~B the selected endpoint $\beta_{t,L}=r_t$ does not depend on $x_t$, so $-(1-\lambda)r_t$ is an additive constant and $\lambda$ leaves the argmin: the program reduces to worst-case expectile minimization subject to \eqref{eq:EXB-c}. The share of rebalancing dates in Case~A therefore determines whether the mean--risk trade-off was active at all and should be reported alongside performance. Under the $\ell_1$ ground metric the Case~A condition \eqref{eq:caseA} is easy to satisfy for diversified portfolio than $\ell_\infty$ ground metric, since
$\|x_t\|_\infty\le\|x_t\|_1=1$. Hence, the mean--risk trade-off remains relevant over a larger region of the feasible set.
\end{remark}

\subsection{Solution design and computational complexity}

The programs of Theorem~\ref{thm:reformulation} are linear in $(x_t,q_t,s_{t,i},w_t)$ only for a fixed multiplier: the products $a_{t,1}\hat{\mu}_t^{\top}x_t$ in \eqref{eq:EXA1-b} and
$a_{t,1}\hat{\xi}_{t,i}^{\top}x_t$ in \eqref{eq:EXA1-c}--\eqref{eq:EXA1-d} are bilinear in the dual multiplier and the allocation vector. We treat this bilinearity exactly, and two features of the reformulation permit it. First, once the intermediate maximization has been resolved as in Section~\ref{sec:reformulation}, $a_{t,1}$ is no longer adversarial: what remains is a joint minimization over $(x_t,a_{t,1})$, so the order of optimization may be interchanged without loss.
Second, by Lemma~\ref{lem:timeconsistency} the periods decouple, so $a_{t,1}$ is a single scalar per rebalancing date, by Lemma~\ref{lem:betaL} $a_{t,1}\le 1$ and bounded below within each branch.

The period-$t$ subproblem is therefore a one-dimensional search over a bounded interval, at each point of which a linear program is solved. Within each of the four branches we restrict $a_{t,1}$ to its regime interval---$[-\kappa_\tau/2,\,1]$ for EX-A1 and EX-B1, $[-a_{\max},\,-\kappa_\tau/2)$
for EX-A2 and EX-B2---evaluate a coarse grid, refine locally about the incumbent, and return the allocation attaining the smallest objective value across the four branches. The coarse grid is needed because the value function $a\mapsto\min_{x_t,q_t,s,w_t}\{\text{branch objective}\}$ is not in general unimodal, so a pure golden-section search may terminate at a local minimizer.

Each linear program contains $n+N_t+2$ variables, consisting of the allocation vector $x_t$, the worst-case expectile $q_t$, the $N_t$ epigraph variables $s_{t,i}$ and the auxiliary variable $w_t$. The formulation has $2N_t+n+3$ constraints in Case~A and $2N_t+n+4$ constraints in Case~B, with the additional constraint in Case~B arising from the second inequality in \eqref{eq:EXB-c}.

\section{Structural properties}
\label{sec:properties}

The reformulation yields four results of independent interest. To isolate the effect of the ambiguity radius, Propositions~\ref{prop:sensitivity} and~\ref{prop:degeneracy} are stated for the ambiguity set \eqref{eq:ambiguityset} at fixed $x_{t}$, that is without the mean constraint \eqref{eq:inner-b}; the model inherits the qualitative conclusions. Let
\begin{equation}
\hat{L}_{t,i}:=-\hat{\xi}_{t,i}^{\top}x_{t},
\qquad
\bar{L}_{t}:=\frac{1}{N_{t}}\sum_{i=1}^{N_{t}}\hat{L}_{t,i}=-\hat{\mu}_{t}^{\top}x_{t},
\qquad
\overline{(\hat{L}_{t}-q)^{+}}:=\frac{1}{N_{t}}\sum_{i=1}^{N_{t}}(\hat{L}_{t,i}-q)^{+},
\nonumber
\end{equation}
for the realized sample losses, their mean and the mean sample exceedance above a level
$q\in\mathbb{R}$. Let
\begin{equation}
\hat{P}_{t}(q):=\frac{1}{N_{t}}\bigl|\{i:\hat{L}_{t,i}>q\}\bigr|
\qquad\text{and}\qquad
\mathrm{sp}_{t}(x_{t}):=\max_{1\leq i\leq N_{t}}\hat{L}_{t,i}-\bar{L}_{t}
=\max_{1\leq i\leq N_{t}}(\hat{\mu}_{t}-\hat{\xi}_{t,i})^{\top}x_{t}
\nonumber
\end{equation}
denote the empirical exceedance frequency and the empirical upper spread of the portfolio loss. The frequency $\hat{P}_{t}$ is evaluated under $F_{t,N_{t}}$ and is to be distinguished from $P=\mathbb{P}_{F^{\star}}(L_{x_{t}}>q_{t})$ of Remark~\ref{rem:slope}, which is evaluated under the worst-case measure. Being a maximum of linear functions of $x_{t}$, the map $\mathrm{sp}_{t}$ is sublinear, that is convex and positively homogeneous; and since $\sum_{i}(\hat{\mu}_{t}-\hat{\xi}_{t,i})^{\top}x_{t}=0$ we have
$\mathrm{sp}_{t}(x_{t})\geq0$, with equality precisely when the sample losses coincide.

\begin{proposition}[Endogenous price of robustness]
\label{prop:sensitivity}
Let
\begin{equation}
q_{t}^{\star}(\rho_{t}):=\sup_{F_{t}\in\mathcal{P}_{t}(F_{t,N_{t}})}
e^{F_{t}}_{\tau}(L_{x_{t}}),
\qquad
S_{t}(\rho_{t}):=\bigl\{i:\hat{L}_{t,i}>q_{t}^{\star}(\rho_{t})\bigr\},
\qquad
m_{t}:=|S_{t}(\rho_{t})|,
\nonumber
\end{equation}
so that $m_{t}/N_{t}=\hat{P}_{t}(q_{t}^{\star})$. Then $q_{t}^{\star}$ is non-decreasing in $\rho_{t}$, $m_{t}$ is non-increasing in $\rho_{t}$, and on any interval over which $S_{t}$ is constant $q_{t}^{\star}$ is exactly affine in $\rho_{t}$ with
\begin{equation}
\frac{\partial q_{t}^{\star}}{\partial\rho_{t}}
=\frac{(1+\kappa_{\tau})\,\dnorm{x_{t}}}{1+\kappa_{\tau}\,m_{t}/N_{t}}.
\label{eq:sensitivity}
\end{equation}
Consequently $q_{t}^{\star}$ is piecewise affine, non-decreasing and convex in $\rho_{t}$. By contrast, the worst-case CVaR $\alpha\in(0,1)$ on the same ambiguity set satisfies
\begin{equation}
\sup_{F_{t}\in\mathcal{P}_{t}(F_{t,N_{t}})}\mathrm{CVaR}^{F_{t}}_{\alpha}(L_{x_{t}})
=\mathrm{CVaR}^{F_{t,N_{t}}}_{\alpha}(L_{x_{t}})
+\frac{\rho_{t}\dnorm{x_{t}}}{1-\alpha},
\qquad\text{so}\qquad
\frac{\partial}{\partial\rho_{t}}\sup_{F_{t}}\mathrm{CVaR}^{F_{t}}_{\alpha}(L_{x_{t}})
=\frac{\dnorm{x_{t}}}{1-\alpha},
\label{eq:cvar-sensitivity}
\end{equation}
which is constant in $\rho_{t}$ and invariant to the loss distribution.
\end{proposition}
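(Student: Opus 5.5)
We will derive the result from Theorem~\ref{thm:envelope} applied to the plain Wasserstein ball \eqref{eq:ambiguityset}, without the mean constraint. The first step is to compute the robust expectation of the two-piece integrand of Lemma~\ref{lem:twopiece} in closed form. The function $h_t$ is convex and piecewise linear in $\xi_t$, and its pieces have gradients $-k_1x_t$ and $-k_2x_t$. By the standard result for piecewise-affine losses \citep{mohajerin2018data}, which holds when $\Xi=\mathbb{R}^n$, the worst-case expectation equals the empirical expectation plus $\rho_t$ times the largest Lipschitz constant. This gives
\begin{equation}
\Psi_t(q)=\bar{L}_t+\kappa_\tau\,\overline{(\hat{L}_t-q)^{+}}+(1+\kappa_\tau)\rho_t\dnorm{x_t}-q .
\nonumber
\end{equation}
Equivalently, in the dual \eqref{eq:dual-a}--\eqref{eq:dual-d} we drop $\gamma_{t,1}$ and take $\gamma_{t,2}=k_2\dnorm{x_t}$. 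The worst case is attained asymptotically by moving mass $1/N_t$ from one atom in the direction that maximizes $-z^\top x_t$, with $\norm{z}\le1$, and sending it to infinity. That atom lies on the steep piece.

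The second step is to read off the root. By Theorem~\ref{thm:envelope}, $q^\star_t(\rho_t)$ is the unique root of $\Psi_t$. Write $\phi(q):=q-\kappa_\tau\overline{(\hat{L}_t-q)^+}$. This function is continuous, strictly increasing and piecewise affine, with slope $1+\kappa_\tau\hat{P}_t(q)$ between consecutive sample losses. The root condition reads $\phi(q^\star_t)=\bar{L}_t+(1+\kappa_\tau)\rho_t\dnorm{x_t}$, which is affine and increasing in $\rho_t$. Monotonicity follows at once: $q^\star_t$ is non-decreasing, and it is in fact strictly increasing when $\dnorm{x_t}>0$. Since $\hat{P}_t(q)$ is non-increasing in $q$, $m_t$ is non-increasing in $\rho_t$.

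On any interval where $S_t$ is constant, $q^\star_t$ stays inside a single linear piece of $\phi$, where $\phi$ has slope $1+\kappa_\tau m_t/N_t$. Inverting that piece gives exactly \eqref{eq:sensitivity}. At a breakpoint, $q^\star_t$ crosses some $\hat{L}_{t,i}$, so $m_t$ drops and the slope rises. The map $\rho\mapsto q^\star_t$ is therefore the inverse of a concave, increasing, piecewise affine map, since $\phi$ has non-increasing slopes, and this inverse is convex. That proves the piecewise-affine and convexity claims.

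For the CVaR comparison we use the Rockafellar--Uryasev form and interchange $\sup_F$ with $\min_\zeta$. The integrand is $\zeta+(1-\alpha)^{-1}(L-\zeta)^+$, with largest slope $(1-\alpha)^{-1}\dnorm{x_t}$. The same worst-case formula gives, for each $\zeta$, the empirical value plus the constant $\rho_t\dnorm{x_t}/(1-\alpha)$. Minimizing over $\zeta$ then yields \eqref{eq:cvar-sensitivity}, and its derivative is constant in $\rho_t$.

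We expect the main obstacle to be justifying that interchange, together with the exactness of the closed-form worst case. The worst case is a supremum that is not attained, since mass escapes to infinity, and it requires unbounded support. For CVaR we would invoke a minimax theorem: the objective is convex in $\zeta$ and concave (indeed linear) in $F$, and the Wasserstein ball is weakly compact in the relevant sense. Alternatively, we would verify directly that the additive constant does not depend on $\zeta$, so that $\sup_F\min_\zeta\le\min_\zeta\sup_F$ holds with equality via the perturbed-atom sequence. For the expectile, no interchange is needed because Theorem~\ref{thm:envelope} handles it.
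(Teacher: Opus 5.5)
Your proposal is correct and follows essentially the same route as the paper. The paper also obtains $\sup_{F_t}\mathbb{E}_{F_t}[h_t]=\bar{L}_t+\kappa_\tau\overline{(\hat{L}_t-q)^+}+(1+\kappa_\tau)\rho_t\dnorm{x_t}$ from the dual at $\gamma_{t,1}=0$, which is equivalent to the Lipschitz-constant formula you cite. It then solves the resulting scalar fixed point on a fixed active set $S_t$, and treats CVaR through the Rockafellar--Uryasev form, an interchange of $\sup_F$ and $\min_q$, and the same Lipschitz bound.

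Your convexity argument, via the inverse of the concave increasing map $\phi$, is a slightly tidier version of the paper's ``slopes are non-decreasing'' step. Your aside that the worst case is never attained holds only when $q>\max_i\hat{L}_{t,i}$; otherwise, moving the full mass $1/N_t$ of an atom with $\hat{L}_{t,i}\ge q$ a finite distance $N_t\rho_t$ attains it. This does not affect your proof, since you only use the value of the supremum.
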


Proposition~\ref{prop:sensitivity} shows that the marginal cost of robustness under the expectile objective depends on the endogenous exceedance frequency $\hat{P}_{t}(q_{t}^{\star})$. A larger exceedance frequency reduces the sensitivity of $q_t^\star$ to the Wasserstein radius $\rho_t$. As $\rho_t$ increases the exceedance frequency may decline and the marginal effect of additional robustness becomes stronger. In contrast the robust CVaR adjustment in \eqref{eq:cvar-sensitivity} is additively separable from the empirical CVaR and its sensitivity to $\rho_t$ is determined only by the confidence level $\alpha$. This difference reflects the broader dependence of expectiles on the distribution of losses compared with quantile-based risk measures \citep{kuan2009assessing,bellini2017risk}.

\begin{proposition}[Degeneracy threshold]
\label{prop:degeneracy}
Suppose $\mathrm{sp}_{t}(x_{t})>0$ and define
\begin{equation}
\rho^{\mathrm{crit}}_{t}(x_{t}):=
\frac{\mathrm{sp}_{t}(x_{t})}{(1+\kappa_{\tau})\,\dnorm{x_{t}}}
=\frac{\max_{i}\hat{L}_{t,i}-\bar{L}_{t}}{(1+\kappa_{\tau})\,\dnorm{x_{t}}}>0 .
\label{eq:rhocrit}
\end{equation}
Then the following hold.
\begin{enumerate}
\item[(i)] $\rho_{t}\geq\rho^{\mathrm{crit}}_{t}(x_{t})$ if and only if
$q_{t}^{\star}\geq\max_{i}\hat{L}_{t,i}$, equivalently
$\hat{P}_{t}(q_{t}^{\star})=0$, in which case the second affine piece of
\eqref{eq:twopiece} is inactive at every sample point.
\item[(ii)] On that region the worst-case expectile is available in closed form,
\begin{equation}
q_{t}^{\star}=(1+\kappa_{\tau})\rho_{t}\dnorm{x_{t}}-\hat{\mu}_{t}^{\top}x_{t},
\qquad
\frac{\partial q_{t}^{\star}}{\partial\rho_{t}}=(1+\kappa_{\tau})\dnorm{x_{t}},
\label{eq:degenerate-q}
\end{equation}
the second identity being the $m_{t}=0$ endpoint of \eqref{eq:sensitivity} and the
maximum attainable value of that derivative.
\item[(iii)] If $\rho_{t}<\rho^{\mathrm{crit}}_{t}(x_{t})$ then
$\hat{P}_{t}(q_{t}^{\star})\geq1/N_{t}>0$ and
$\overline{(\hat{L}_{t}-q_{t}^{\star})^{+}}>0$.
\end{enumerate}
If instead $\mathrm{sp}_{t}(x_{t})=0$, all sample losses coincide,
$\rho^{\mathrm{crit}}_{t}(x_{t})=0$, and \eqref{eq:degenerate-q} holds for every
$\rho_{t}\geq0$.
\end{proposition}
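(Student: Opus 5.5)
The plan is to work entirely with the envelope characterization of Theorem~\ref{thm:envelope}, applied to the plain Wasserstein ball \eqref{eq:ambiguityset} at fixed $x_t$ (no mean constraint), and to compute $\Psi$ in closed form. By Lemma~\ref{lem:twopiece} the integrand $h_t$ is a maximum of two affine functions of $\xi$ with slopes $-x_t$ and $-(1+\kappa_\tau)x_t$. Standard $W_1$ duality \citep{mohajerin2018data} for a piecewise-affine integrand on unbounded support $\Xi=\mathbb{R}^n$ then gives
\[
\sup_{F\in\mathcal{P}_t}\mathbb{E}_F h_t=\frac1{N_t}\sum_{i=1}^{N_t}h_t(\hat\xi_{t,i})+\rho_t\,(1+\kappa_\tau)\dnorm{x_t},
\]
because the Lipschitz modulus of $h_t$ is the larger dual-norm slope, $(1+\kappa_\tau)\dnorm{x_t}$. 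The lower bound comes from moving mass $\varepsilon$ far out along the maximizing direction, and the upper bound is Kantorovich duality. Hence
\[
\Psi(q)=\bar L_t+\kappa_\tau\overline{(\hat L_t-q)^+}+(1+\kappa_\tau)\rho_t\dnorm{x_t}-q.
\]
This is the same formula that underlies Proposition~\ref{prop:sensitivity}, and I would take it as given from there if it is already established.

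Next, define $q_0:=(1+\kappa_\tau)\rho_t\dnorm{x_t}-\hat\mu_t^\top x_t=\bar L_t+(1+\kappa_\tau)\rho_t\dnorm{x_t}$. For any $q\ge\max_i\hat L_{t,i}$ the exceedance term vanishes, so $\Psi(q)=q_0-q$ there.

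For (i) and (ii), note that $\rho_t\ge\rho^{\mathrm{crit}}_t$ is algebraically equivalent to $q_0\ge\max_i\hat L_{t,i}$. In that case $\Psi(q_0)=0$, and uniqueness of the root (Theorem~\ref{thm:envelope}) gives $q^\star_t=q_0\ge\max_i\hat L_{t,i}$. This proves the forward direction of (i) together with the closed form \eqref{eq:degenerate-q}. The derivative $(1+\kappa_\tau)\dnorm{x_t}$ is the $m_t=0$ case of \eqref{eq:sensitivity}. It is the maximal value because the denominator $1+\kappa_\tau m_t/N_t\ge1$.

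For the converse, suppose $q^\star_t\ge\max_i\hat L_{t,i}$. Then $0=\Psi(q^\star_t)=q_0-q^\star_t$, so $q_0=q^\star_t\ge\max_i\hat L_{t,i}$, which is $\rho_t\ge\rho^{\mathrm{crit}}_t$. The equivalence with $\hat P_t(q^\star_t)=0$ is immediate from the definition, since $\hat P_t(q)=0$ iff no $\hat L_{t,i}>q$. Inactivity of the second piece follows because $-\kappa_\tau\hat\xi^\top x_t\cdot$ terms satisfy $\hat L_{t,i}-q^\star_t\le0$ at every sample point.

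Part (iii) is the contrapositive of (i): if $\rho_t<\rho^{\mathrm{crit}}_t$ then $q^\star_t<\max_i\hat L_{t,i}$. So at least one sample exceeds $q^\star_t$, which gives $\hat P_t\ge1/N_t$ and a strictly positive mean exceedance.

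Finally, if $\mathrm{sp}_t=0$ then all $\hat L_{t,i}=\bar L_t$, and $\rho^{\mathrm{crit}}_t=0$ by \eqref{eq:rhocrit}. For every $\rho_t\ge0$ we have $q_0\ge\bar L_t=\max_i\hat L_{t,i}$, so the same argument applies. The case $\dnorm{x_t}>0$ holds automatically on the simplex, so \eqref{eq:rhocrit} is well defined.

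The main obstacle is the first step, namely exactness of the worst-case expectation formula with the full slope $(1+\kappa_\tau)\dnorm{x_t}$ applied uniformly to every sample point. This needs the support to be unbounded in the relevant direction, since otherwise only an inequality holds. I would state the assumption $\Xi=\mathbb{R}^n$ explicitly. I would then verify attainment in the limit with a sequence of measures that transports an $\varepsilon$-fraction of mass a distance $\rho_t/\varepsilon$ along the norming direction of $x_t$, where the second affine piece dominates. The rest is algebra on the scalar function $\Psi$.
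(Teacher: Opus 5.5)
Your proof is correct and is essentially the paper's argument: both reduce everything to the scalar function $\Psi(q)=\bar{L}_{t}+\kappa_\tau\overline{(\hat{L}_{t}-q)^{+}}+(1+\kappa_\tau)\rho_t\dnorm{x_t}-q$, which is the paper's $G(q)-q$ from \eqref{eq:G}. The only cosmetic difference is that the paper tests the sign of $\Psi$ at $\max_i\hat{L}_{t,i}$ via strict monotonicity, while you verify that the candidate closed form is a root and invoke uniqueness; your caveat that the exact worst-case expectation formula needs unbounded support (e.g.\ $\Xi=\mathbb{R}^n$) is well taken, since the paper uses it implicitly when deriving \eqref{eq:G}.
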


Proposition~\ref{prop:degeneracy} identifies a radius beyond which the model ceases to be an expectile model at all: the tail term vanishes identically and the objective reduces to \eqref{eq:degenerate-q}, in which $\tau$ enters only through the scalar multiplier $1+\kappa_{\tau}$.
\begin{remark}[Relation to the scalar worst-case expectile]
\label{rem:hcm}
Proposition~\ref{prop:degeneracy} admits a direct connection with an existing univariate result. Fix $x_{t}$ and map $\xi\mapsto-\xi^{\top}x_{t}$. The image is a scalar reference distribution $G$ with
$\mu_{G}=-\hat{\mu}_{t}^{\top}x_{t}$ and $\operatorname{ess\,sup}G=\max_{i}\hat{L}_{t,i}$,
and since $\xi\mapsto\xi^{\top}x_{t}$ is Lipschitz with modulus $\dnorm{x_{t}}$ the image of
$\mathcal{P}_{t}(F_{t,N_{t}})$ is contained in the scalar $1$-Wasserstein ball around $G$ of
radius $\varepsilon:=\rho_{t}\dnorm{x_{t}}$. 
Using $1+\kappa_{\tau}=\tau/(1-\tau)$, which corresponds to the parameter $\beta$ in \citet{hu2024extreme}, part~(i) of Proposition~\ref{prop:degeneracy} becomes $\operatorname{ess\,sup}G\leq\mu_{G}+\varepsilon(1+\kappa_{\tau})$ while \eqref{eq:degenerate-q} gives the value
$\mu_{G}+\varepsilon(1+\kappa_{\tau})$.
These expressions coincide with the condition and the corresponding worst-case value in Theorem~2(i) of \citet{hu2024extreme}. 
The contribution of Proposition~\ref{prop:degeneracy} therefore lies not in the scalar identity itself but in its dependence on the portfolio allocation. In particular,
$\rho^{\mathrm{crit}}_{t}$ depends on $x_t$ through both $\mathrm{sp}_{t}(x_{t})$ and $\dnorm{x_{t}}$. This decision dependence provides the basis for the ground-metric comparison in Proposition~\ref{prop:degeneracy}.
\end{remark}

\begin{proposition}[Recovery of the nominal model]
\label{prop:nominal}
Let $\rho_{t}=0$ for every $t$ and let $\lambda\in(0,1)$. Then:
\begin{enumerate}
\item[(i)] $\mathcal{P}_{t}(F_{t,N_{t}})=\{F_{t,N_{t}}\}$, the interval
\eqref{eq:betabounds} collapses to the single point
$\beta_{t,L}=\beta_{t,U}=\hat{\mu}_{t}^{\top}x_{t}$, Case B \eqref{eq:caseB} is empty, and $x_{t}$ is feasible if and only if $\hat{\mu}_{t}^{\top}x_{t}\geq r_{t}$;
\item[(ii)] the right-hand side of \eqref{eq:EXA1-b} is independent of $a_{t,1}$ and equals $\bar{L}_{t}+\kappa_{\tau}\,\overline{(\hat{L}_{t}-q_{t})^{+}}$, whose binding value is the
empirical $\tau$-expectile $e^{F_{t,N_{t}}}_{\tau}(L_{x_{t}})$;
\item[(iii)] the auxiliary variable $w_{t}$ and the $n$ constraints \eqref{eq:EXA1-f} become vacuous;
\item[(iv)] Model \eqref{eq:model-a}--\eqref{eq:model-c} attains the optimal value of the nominal multi-period mean-expectile program
\begin{equation}
\min_{x}\;\sum_{t=1}^{T}\Bigl[-(1-\lambda)\hat{\mu}_{t}^{\top}x_{t}
+\lambda\,e^{F_{t,N_{t}}}_{\tau}(L_{x_{t}})\Bigr]
\quad\text{s.t.}\quad
\hat{\mu}_{t}^{\top}x_{t}\geq r_{t},\quad x_{t}\in\mathcal{X},
\nonumber
\end{equation}
whose per-period risk term admits the scenario-aggregation linear program of
\citet{jakobsons2016scenario}.
\end{enumerate}
\end{proposition}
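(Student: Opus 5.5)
The plan is to verify the four items in order, each following almost directly from setting $\rho_t=0$ in the constructions of Sections~\ref{sec:ambiguity}--\ref{sec:theorem2}, and then assemble them into (iv).

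\emph{Item (i).} Since $W_1$ is a metric on probability measures with finite first moment, $W_1(F_t,F_{t,N_t})\le 0$ forces $F_t=F_{t,N_t}$. Hence $\mathcal{P}_t(F_{t,N_t})=\{F_{t,N_t}\}$. Substituting $\rho_t=0$ into \eqref{eq:betabounds} gives $\beta_{t,U}=\hat{\mu}_t^{\top}x_t$, and gives $\beta_{t,L}=\max\{r_t,\hat{\mu}_t^{\top}x_t\}$. The mean constraint $\mathbb{E}_F(\xi_t^{\top}x_t)=\beta_t$ can hold only at $\beta_t=\hat{\mu}_t^{\top}x_t$, so the set $\{\beta_t\ge r_t\}$ is nonempty exactly when $\hat{\mu}_t^{\top}x_t\ge r_t$. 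The Case~B condition \eqref{eq:caseB} then reads $\hat{\mu}_t^{\top}x_t<r_t\le\hat{\mu}_t^{\top}x_t$, which is impossible.

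\emph{Item (ii).} I would not rely on the dual \eqref{eq:dual2-a}--\eqref{eq:dual2-d} through Slater: Slater fails at $\rho_t=0$, and this is the one delicate point. Instead I would argue directly.
\begin{itemize}
\item With $\rho_t w_t=0$, constraint \eqref{eq:EXA1-b} becomes $q_t\ge (a_{t,1}-1)\hat{\mu}_t^{\top}x_t+\frac1{N_t}\sum_i s_{t,i}$.
\item Minimizing over $s_{t,i}$ gives $s_{t,i}=\max\{a_{t,1}\hat{L}_{t,i},\,a_{t,1}\hat{L}_{t,i}+\kappa_\tau(\hat{L}_{t,i}-q_t)\}=a_{t,1}\hat{L}_{t,i}+\kappa_\tau(\hat{L}_{t,i}-q_t)^+$.
\item Averaging, $\frac1{N_t}\sum_i s_{t,i}=-a_{t,1}\hat{\mu}_t^{\top}x_t+\kappa_\tau\overline{(\hat{L}_t-q_t)^+}$.
\item Adding $(a_{t,1}-1)\hat{\mu}_t^{\top}x_t$, the $a_{t,1}$ terms cancel and the right-hand side is $\bar L_t+\kappa_\tau\overline{(\hat L_t-q_t)^+}$.
\end{itemize}
The same cancellation occurs in the EX-A2 constraint, because its extra term $(1-2a_{t,1})\rho_tw_t$ also vanishes. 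So the constraint is $q_t\ge \bar L_t+\kappa_\tau\overline{(\hat L_t-q_t)^+}$, i.e.\ $g(q_t;F_{t,N_t})\le 0$. Since $g$ is strictly decreasing, the smallest feasible $q_t$ is its unique root. By Lemma~\ref{lem:fixedpoint} (equivalently Theorem~\ref{thm:envelope} with a singleton $\mathcal{P}$), that root is $e^{F_{t,N_t}}_\tau(L_{x_t})$. Remark~\ref{rem:no-bisection} justifies that the minimization attains this binding value, because $\lambda>0$.

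\emph{Item (iii).} The variable $w_t$ enters only through products with $\rho_t$, namely in the objective, in \eqref{eq:EXA1-b} and in \eqref{eq:EXA1-e}. At $\rho_t=0$ all of these vanish, so \eqref{eq:EXA1-f} places no restriction on $(x_t,q_t)$: any $w_t\ge\|x_t\|_\infty$ is admissible and nothing depends on it.

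\emph{Item (iv).} Case~B is empty, so only EX-A1 and EX-A2 remain. Their objectives reduce to $-(1-\lambda)\hat{\mu}_t^{\top}x_t+\lambda q_t$ with $q_t=e^{F_{t,N_t}}_\tau(L_{x_t})$, independent of $a_{t,1}$, and \eqref{eq:EXA1-e} reduces to $\hat{\mu}_t^{\top}x_t\ge r_t$. This is precisely the nominal program. Separability across $t$ (Lemma~\ref{lem:timeconsistency}) lets the periods be summed. As a consistency check avoiding Theorem~\ref{thm:reformulation} altogether, the original model evaluated at the singleton ambiguity set gives the same value directly by \eqref{eq:split}.

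Finally, the identification with the scenario LP of \citet{jakobsons2016scenario} follows from the dual representation \eqref{eq:expectile-dual} under the empirical measure, which turns the risk term into a finite LP over $\varphi\in\mathbb{R}^{N_t}$. I would cite this rather than re-derive it.
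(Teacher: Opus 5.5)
Your proposal is correct and follows the paper's proof essentially step for step. It collapses the ball and \eqref{eq:betabounds} for (i) and (iii), eliminates $s_{t,i}$ at its larger lower bound so the $a_{t,1}$ terms cancel in (ii), and uses Lemma~\ref{lem:fixedpoint} with Remark~\ref{rem:no-bisection} for the binding root. Your two additions are worth keeping: noting that the EX-A2 constraint collapses identically, and checking (iv) directly on the singleton ambiguity set via \eqref{eq:split}, since Theorem~\ref{thm:reformulation} is stated only for $\rho_t>0$ and the paper's proof of (iv) applies it at $\rho_t=0$ without comment.
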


\begin{proposition}[Norm dependence and implicit concentration penalty]
\label{prop:norm}
The dual norm $\dnorm{x_{t}}$ enters the worst-case value through the term
$(1+\kappa_\tau)\rho_{t}\dnorm{x_{t}}$ and the degeneracy threshold \eqref{eq:rhocrit} through $\dnorm{x_{t}}^{-1}$. Consequently the ground metric in \eqref{eq:wasserstein} determines the implicit regularisation induced by robustification:
\begin{enumerate}
\item[(i)] $\ell_{\infty}$ ground metric (dual $\ell_{1}$): on the simplex
$\norm{x_{t}}_{1}\equiv1$, the robust term is the constant $(1+\kappa_\tau)\rho_{t}$, and $\rho_{\mathrm{crit}}$ is identical for every allocation. Robustification exerts no first-order penalty on concentration, and past the threshold the program reduces to maximization of the sample mean;
\item[(ii)] $\ell_{1}$ ground metric (dual $\ell_{\infty}$): the penalty
$(1+\kappa_\tau)\rho_{t}\norm{x_{t}}_{\infty}$ is a maximum-weight concentration penalty, minimized at equal weights, and remains linear upon introducing $w_{t}\geq x_{t,i}$; $\rho^{\mathrm{crit}}_{t}$ scales with $\norm{x_{t}}_{\infty}^{-1}$;
\item[(iii)] $\ell_{2}$ ground metric: the penalty $(1+\kappa_\tau)\rho_{t}\norm{x_{t}}_{2}$ is of ridge type and the program becomes a second-order cone program.
\end{enumerate}
\end{proposition}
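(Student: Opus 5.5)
The plan is to treat Proposition~\ref{prop:norm} as a corollary of the closed forms already obtained and to read off the role of $\dnorm{x_t}$ separately for each ground metric. The starting point is the fixed-point constraint \eqref{eq:EXA1-b}, together with its Case~B analogue \eqref{eq:EXB-b}. In both constraints the dual norm enters only through $\gamma_{t,2}^{\star}=\gamma_{t,2}^{0}(a_{t,1})\dnorm{x_t}$ from \eqref{eq:gamma2}; by Remark~\ref{rem:inherit} this factorization does not depend on which norm is chosen. On the regime $a_{t,1}\ge-\kappa_\tau/2$ the coefficient of $\rho_t\dnorm{x_t}$ in \eqref{eq:EXA1-b} equals $1+\kappa_\tau$, which matches the slope $(1+\kappa_\tau)\dnorm{x_t}$ of \eqref{eq:degenerate-q} in Proposition~\ref{prop:degeneracy}(ii). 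It also matches the numerator of \eqref{eq:sensitivity}. The dependence on $\dnorm{x_t}^{-1}$ in the threshold is immediate from \eqref{eq:rhocrit}. Together these establish the opening sentence, so the three items reduce to evaluating $\dnorm{\cdot}$ on $\mathcal{X}$.

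For (i), I will use $\dnorm{\cdot}=\norm{\cdot}_1$ and $e^\top x_t=1$ with $x_t\ge0$. These give $\norm{x_t}_1=1$, so the robust term becomes the constant $(1+\kappa_\tau)\rho_t$ and \eqref{eq:rhocrit} simplifies to $\mathrm{sp}_t(x_t)/(1+\kappa_\tau)$. This is constant in the denominator, and I will interpret the claim ``identical for every allocation'' as referring to this norm factor. Because the term is constant, its gradient in $x_t$ vanishes, which is what ``no first-order penalty on concentration'' means. For the statement beyond the threshold, I will substitute \eqref{eq:degenerate-q} into the objective $-(1-\lambda)\beta_{t,L}+\lambda q_t^\star$ with $\beta_{t,L}=\hat{\mu}_t^\top x_t-\rho_t$. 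This leaves $-\hat{\mu}_t^\top x_t$ plus a constant, and minimizing it over $\mathcal{X}$ (subject to the return constraint) is maximization of the sample mean. Its optimum is a vertex, namely a single asset, which explains the concentration.

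For (ii), I will use $\dnorm{\cdot}=\norm{\cdot}_\infty$. I will show that $\norm{x}_\infty\ge 1/n$ on $\mathcal{X}$ because $1=\sum_j x_j\le n\max_j x_j$, with equality if and only if all $x_j=1/n$. This makes equal weights the unique minimizer of the penalty. Linearity is provided by the epigraph constraints \eqref{eq:EXA1-f}, and Remark~\ref{rem:w} already proves that they are exact, so I will cite it. The scaling of $\rho^{\mathrm{crit}}_t$ follows from \eqref{eq:rhocrit}. For (iii), the penalty is $(1+\kappa_\tau)\rho_t\norm{x_t}_2$, which I will epigraph as $w_t\ge\norm{x_t}_2$, a second-order cone constraint. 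All other constraints in the branch programs stay linear, so the program is an SOCP. Its ridge-type character follows from the fact that $\norm{x}_2^2\ge 1/n$ on the simplex, with equality only at equal weights.

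The main obstacle is justifying the replacement of $\dnorm{x_t}$ by the epigraph variable in (iii), and to a lesser extent in (ii), inside Case~B. There \eqref{eq:EXB-c} can push $w_t$ upward, so the argument must rely on the per-branch exactness reasoning of Remark~\ref{rem:w} rather than on a simple monotonicity argument. I would first try to repeat that remark verbatim with $\norm{\cdot}_2$ in place of $\norm{\cdot}_\infty$.
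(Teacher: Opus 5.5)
Your route is the paper's proof. The paper likewise lists the channels through which $\dnorm{x_t}$ enters, including the Kantorovich bound behind \eqref{eq:betabounds}; your opening list omits that channel, although your part (i) uses it. It then evaluates $\norm{x_t}_1\equiv1$, $\norm{x_t}_\infty\ge 1/n$ and $\norm{x_t}_2\ge n^{-1/2}$ on $\mathcal{X}$, with equality only at equal weights. Finally it cites Remark~\ref{rem:w} for (ii) and calls (iii) a cone constraint for fixed $a_{t,1}$.

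On Case~A your argument is sound. Your caveat that under $\ell_\infty$ the threshold $\mathrm{sp}_t(x_t)/(1+\kappa_\tau)$ still varies with $x_t$ is correct, and the paper glosses over it. For the sample-mean clause, you should justify importing \eqref{eq:degenerate-q}, which was proved without the mean constraint. Insert the minimal $s_{t,i}=a_{t,1}\hat L_{t,i}+\kappa_\tau(\hat L_{t,i}-q_t)^+$; the $a_{t,1}$ terms in \eqref{eq:EXA1-b} then cancel, and the constraint is exactly $q_t\ge G(q_t)$ with $G$ as in \eqref{eq:G}. So your Case~A objective $-\hat\mu_t^\top x_t+(1+\lambda\kappa_\tau)\rho_t$ is right.

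The gap is Case~B, over which the minimum in Theorem~\ref{thm:reformulation} also ranges, and it affects two of your steps.

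\emph{The sample-mean clause fails there.} The same substitution turns \eqref{eq:EXB-b} into $q_t\ge -r_t+a_{t,1}(r_t-\hat\mu_t^\top x_t)+\gamma^{0}_{t,2}(a_{t,1})\rho_t w_t+\kappa_\tau\overline{(\hat L_t-q_t)^+}$. Under \eqref{eq:EXB-c} this is minimized at $a_{t,1}=-\kappa_\tau/2$, giving $q_t=-r_t+\tfrac{\kappa_\tau}{2}(\rho_t w_t-r_t+\hat\mu_t^\top x_t)+\kappa_\tau\overline{(\hat L_t-q_t)^+}$. This is increasing in the sample mean, so the branch does not reduce to sample-mean maximization.

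\emph{Transcribing Remark~\ref{rem:w} does not remove the obstacle you flag.} The Case~B conditions \eqref{eq:caseB} bound $\rho_t\dnorm{x_t}$ from below. Once they are written with an epigraph variable as in \eqref{eq:EXB-c}, consider any $x_t$ with $r_t>\hat\mu_t^\top x_t+\rho_t\dnorm{x_t}$, which the original model excludes. It becomes feasible in EX-B1 at $w_t=(r_t-\hat\mu_t^\top x_t)/\rho_t$, where the fixed point collapses to $q_t=-r_t+\kappa_\tau\overline{(\hat L_t-q_t)^+}$. For a riskless asset with return $\mu_0<r_t-\rho_t$ this is the finite value $-(r_t+\kappa_\tau\mu_0)/(1+\kappa_\tau)$. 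That value can undercut every genuinely feasible portfolio (take the only other asset highly dispersed). So the remark's ``conservative elsewhere'' fails. Moreover, under $\ell_2$ the Case~B region is in general non-convex, hence not SOC-representable.

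\emph{Repair.} State and prove the sample-mean clause and the linearity and SOC claims on the Case~A branch. That is where the penalty $(1+\kappa_\tau)\rho_t\dnorm{x_t}$ actually lives. There the objective, \eqref{eq:EXA1-b} and \eqref{eq:EXA1-e} all push $w_t$ down, so $w_t\ge\dnorm{x_t}$ is exact for any norm. In Case~B the epigraph gives only a relaxation, exact a posteriori when the optimal $w_t$ is tight.
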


Case~(i) is the configuration adopted by \citet{wang2026multi}. Taken together
with Proposition~\ref{prop:degeneracy}, it implies that, under a long-only
constraint, an $\ell_{\infty}$ ground metric makes the robust term unable to
discriminate between allocations and, beyond $\rho_{\mathrm{crit}}$, leaves
sample-mean maximization as the allocation-relevant component. This motivates
our use of the $\ell_{1}$ ground metric in Section~\ref{sec:ambiguity}.

\section{Empirical analysis}
\label{sec:empirical}
 
\subsection{Data and experimental design}
\label{sec:data}
 
The empirical analysis uses daily simple returns for the $n=90$ constituents of the FTSE index obtained from LSEG Refinitiv for the period from 20 September 2012 to 26 June 2026, comprising $3,591$ trading days. The estimation window is $N_{t}=250$ trading days and the portfolio is rebalanced every $21$ trading days, giving $160$ rebalancing days and an out-of-sample record of $3\,341$ days, or $13.3$ years. Constituents are taken as of the end of the sample and held fixed. This approach may introduce survivorship bias and consequently overstate the reported returns; however, the bias is common to all competing models and therefore does not affect their relative comparison.
 
At each rebalancing date the estimation window is used to form
$\hat\mu_{t}$ and the scenario set $\{\hat\xi_{t,i}\}_{i=1}^{250}$; the period
subproblem is solved; and the resulting weights are held, drifting with
realized returns, until the next rebalancing date. The return target is $r_{t}=\hat\mu_{t}^{\top}(\tfrac1n e)$, the equally weighted mean, so the model must at least match a naive allocation in expectation. Results are reported gross of transaction costs. We set $\tau=\alpha=0.95$ and report
$\lambda\in\{0.10,0.15,0.20\}$ and $\rho_{t}\in\{0.001,0.005,0.010\}$. The
choice of radius is justified in Section~\ref{sec:matching}. Supplementary Section S1 reports the resulting sensitivity analysis.
 
\subsection{Matching risk levels and ambiguity radii}
\label{sec:matching}
 
Two matching problems must be resolved before any cross-model comparison is
meaningful.
\paragraph{Risk level}
Comparing $e_{\tau}$ at $\tau=0.95$ with $\mathrm{CVaR}_{\alpha}$ at
$\alpha=0.95$ is not automatic, since the two functionals are not on a common
scale. We report both functionals for every model so that the reader may judge the comparison directly.
\paragraph{Ambiguity radius}
A common nominal $\rho$ is not a common ambiguity level. Differentiating
each objective with respect to $\rho_{t}$ at fixed $x_{t}$ gives the
coefficient $c(\lambda)$ that multiplies $\rho_{t}\dnorm{x_{t}}$:
\begin{equation}
c_{\text{exp}}=1+\lambda\kappa_{\tau},\qquad
c_{\text{CVaR}}=1+\tfrac{\lambda}{1-\alpha},\qquad
c_{\text{MAD}}=1,\qquad
c_{\text{TSDR}}=1+\tfrac{\lambda}{1-\alpha}+p .
\label{eq:ceff}
\end{equation}
At $\tau=\alpha=0.95$ the expectile and CVaR coefficients differ by $3.5\%$
($2.80$ against $2.90$ at $\lambda=0.1$), so those two families require no
rescaling. Distributionally robust mean absolute deviation has $c=1$ and is
therefore run on the rescaled grid $\rho\cdot c_{\text{exp}}$; in the two-stage
model shortfall penalty fixed at $p=0.1$. Table~\ref{tab:ceff} records the
coefficients and the resulting grids. Throughout we report the effective
radius $c\rho$ alongside the nominal one.
 
\begin{table}[htbp]
\centering
\caption{Effective robustification coefficients from \eqref{eq:ceff} and the
radius grids used. A common $c\rho$ is the matched quantity; a common $\rho$ is
not.}
\label{tab:ceff}
\small
\begin{tabular}{@{}lcccl@{}}
\toprule
Model & $c(0.10)$ & $c(0.15)$ & $c(0.20)$ & $\rho$ grid at $\lambda=0.10$ \\
\midrule
Proposed & 2.80 & 3.70 & 4.60 & $0.001,\ 0.005,\ 0.010$ \\
DR-CVaR-Mean ($\ell_{1},\ell_{\infty}$) & 2.90 & 3.85 & 4.80 & $0.001,\ 0.005,\ 0.010$ \\
DR-MAD & 1.00 & 1.00 & 1.00 & $0.0028,\ 0.0140,\ 0.0280$ \\
Two-stage DR ($p=0.1$) & 3.00 & 3.95 & 4.90 & $0.0009,\ 0.0045,\ 0.0090$ \\
\bottomrule
\end{tabular}
\end{table}
 
\subsection{Benchmark models}
\label{sec:benchmarks}
 
DR-CVaR-Mean under the $\ell_{1}$ ground metric shares the ambiguity set, radius, return target, rebalancing interval and constraint set with the proposed model and differs only in the risk functional: this is reported in Section~\ref{sec:h0}.
DR-CVaR-Mean ($\ell_{\infty}$) is the same model under the $\ell_{\infty}$ ground metric, and the gap between the two models is reported in Section~\ref{sec:h2}. MEPO and MCVaRPO are the nominal ($\rho=0$) counterparts. DR-MAD, Two-stage DR and DR-Moment represent alternative robust models; Mean-Variance, Minimum Variance, $1/N$ and $1/N$ buy-and-hold are the standard non-robust models. Following \citet{demiguel2009optimal}, the equally weighted portfolio serves as a stringent benchmark for evaluating the performance of the proposed model.

\subsection{Main results}
\label{sec:mainresults}
 
Table~\ref{tab:main-ours} reports the proposed model across the nine
$(\rho_{t},\lambda)$ cells together with the theoretical diagnostics of
Sections~\ref{sec:beta} and~\ref{sec:properties}.
Three features are worth noting before any comparison. First, every rebalance
solved and the linearization was exact at every one, so Remark~\ref{rem:w}'s
conservative regime never arose. Second, $\rho_{t}/\rho_{\mathrm{crit}}$ never
exceeds $0.34$, so the model operates well inside the region where the tail
term is active and Proposition~\ref{prop:degeneracy} does not hinder. Third, the
Case A share falls from unity to $0.21$ as $\rho_{t}$ and $\lambda$ increase, 
which means that the intended mean--risk trade-off becomes inoperative more
often. Supplementary Section~S3 examines this diagnostic and its association
with realized performance.
\begin{table}[htbp]
\centering
\caption{Proposed model, FTSE 2012--2026, $3\,341$ out-of-sample days.
$\rho/\rho_{\mathrm{crit}}$ is the degeneracy ratio of
Proposition~\ref{prop:degeneracy}; $s_{A}$ is the share of rebalancing dates in
Case A, where $\lambda$ influences the allocation (Remark~\ref{rem:caseB});
$w$-tight is the share on which the linearization $w_{t}=\norm{x_{t}}_{\infty}$
is exact (Remark~\ref{rem:w}). }
\label{tab:main-ours}
\footnotesize
\begin{tabular}{@{}llrrrrrrrrrr@{}}
\toprule
$\rho$ & $\lambda$ & $\rho/\rho_{\mathrm{crit}}$ & $s_{A}$ & $w$-tight &
Ann.\ ret. & Ann.\ vol. & Sharpe & MaxDD & $\mathrm{CVaR}_{.95}$ & $e_{.95}$ &
Eff.\ $n$ \\
\midrule
0.001 & 0.10 & 0.124 & 1.00 & 1.00 & 0.2062 & 0.2052 & 1.005 & 0.4441 & 0.0293 & 0.0142 &  4.98 \\
0.001 & 0.15 & 0.105 & 1.00 & 1.00 & 0.1613 & 0.1721 & 0.937 & 0.3952 & 0.0252 & 0.0122 &  7.14  \\
0.001 & 0.20 & 0.093 & 1.00 & 1.00 & 0.1441 & 0.1532 & 0.941 & 0.3724 & 0.0226 & 0.0109 &  9.29 \\
0.005 & 0.10 & 0.172 & 1.00 & 1.00 & 0.1394 & 0.1440 & 0.968 & 0.3382 & 0.0216 & 0.0105 & 18.64  \\
0.005 & 0.15 & 0.183 & 0.91 & 1.00 & 0.1166 & 0.1345 & 0.867 & 0.3382 & 0.0202 & 0.0098 & 20.81 \\
0.005 & 0.20 & 0.208 & 0.73 & 1.00 & 0.1007 & 0.1286 & 0.783 & 0.3268 & 0.0195 & 0.0094 & 20.95 \\
0.010 & 0.10 & 0.251 & 0.81 & 1.00 & 0.1127 & 0.1354 & 0.832 & 0.3307 & 0.0204 & 0.0098 & 28.77 \\
0.010 & 0.15 & 0.308 & 0.42 & 1.00 & 0.0812 & 0.1268 & 0.641 & 0.3137 & 0.0191 & 0.0092 & 26.99  \\
0.010 & 0.20 & 0.336 & 0.21 & 1.00 & 0.0674 & 0.1242 & 0.543 & 0.3043 & 0.0185 & 0.0090 & 25.53  \\
\midrule
\multicolumn{2}{@{}l}{$1/N$} & --- & --- & --- & 0.0946 & 0.1541 & 0.614 & 0.3764 & 0.0230 & 0.0110 & 90.00  \\
\bottomrule
\end{tabular}
\end{table}

The cell $(\rho,\lambda)=(0.005,0.10)$ dominates the equally weighted benchmark
on every dimension simultaneously: annualized return $13.9\%$ against $9.5\%$
($+47\%$), volatility $14.4\%$ against $15.4\%$ ($-7\%$), maximum drawdown
$33.8\%$ against $37.6\%$ ($-10\%$), $\mathrm{CVaR}_{0.95}$ $0.0216$ against
$0.0230$ ($-6\%$) and $e_{0.95}$ $0.0105$ against $0.0110$ ($-5\%$), while
holding Eff $n =18.6$. Six of the nine cells dominate $1/N$ on all
four of return, volatility, drawdown and CVaR.

Table~\ref{tab:main-bench} compares the benchmark models. For each robust model family we report the specification with $\lambda=0.10$ and effective radius $c_{\text{exp}}\times0.005$. We also include the parameter-free models. The highest Sharpe ratio in Table~\ref{tab:main-bench} observed in MEPO, the nominal ($\rho=0$) mean-expectile portfolio, at $1.209$ against $0.968$ for the proposed model.
However, MEPO achieves that ratio by holding $2.07$ effective assets with a mean largest weight of $57.6\%$. Its maximum drawdown is $50.6\%$, its
$\mathrm{CVaR}_{0.95}$ is $0.0385$---$67\%$ above the equally weighted
benchmark and $78\%$ above the robust model. The same holds for MCVaRPO. Supplementary Section S4 further compares models for diversification.

Figure~\ref{fig:wealth} complements the terminal statistics by showing how the
differences accumulate through time. All models experience the same major
market drawdown near the middle of the evaluation period, but the proposed
portfolio subsequently recovers more strongly and its wealth difference widens
over the latter part of the sample. The figure provides a descriptive comparison only. Statistical significance is assessed using the dependence-robust inference reported in Supplementary Section~S2.
\begin{table}[htbp]
\centering
\caption{Benchmark comparison at $\lambda=0.10$ and matched effective radius
$c\rho=0.014$ where applicable. Eff.\ $n$ is $1/\mathrm{HHI}$; max $w$ is the
mean largest weight.}
\label{tab:main-bench}
\footnotesize
\begin{tabular}{@{}lrrrrrrrr@{}}
\toprule
Model & Ann.\ ret. & Ann.\ vol. & Sharpe & MaxDD & Calmar &
$\mathrm{CVaR}_{.95}$ & Eff.\ $n$ & max $w$  \\
\midrule
Proposed ($\rho=0.005$) & 0.1394 & 0.1440 & 0.968 & 0.3382 & 0.412 & 0.0216 & 18.64 & 0.055 \\
DR-CVaR-Mean, $\ell_{1}$ ($\rho=0.005$) & 0.0719 & 0.1260 & 0.571 & 0.3183 & 0.226 & 0.0189 & 18.59 & 0.062  \\
DR-CVaR-Mean, $\ell_{\infty}$ ($\rho=0.005$) & 0.0654 & 0.1675 & 0.390 & 0.3478 & 0.188 & 0.0244 &  7.15 & 0.224 \\
DR-MAD ($c\rho=0.014$) & 0.1315 & 0.1457 & 0.903 & 0.3465 & 0.380 & 0.0219 & 28.86 & 0.035  \\
Two-stage DR ($c\rho=0.014$) & 0.1120 & 0.1308 & 0.857 & 0.3183 & 0.352 & 0.0198 & 25.55 & 0.041 \\
DR-Moment & 0.1150 & 0.1299 & 0.885 & 0.3256 & 0.353 & 0.0195 & 10.31 & 0.183  \\
\midrule
MEPO (nominal expectile) & 0.3269 & 0.2705 & 1.209 & 0.5060 & 0.646 & 0.0385 &  2.07 & 0.576  \\
MCVaRPO (nominal CVaR) & 0.2124 & 0.1989 & 1.068 & 0.4358 & 0.487 & 0.0283 &  4.04 & 0.357  \\
Mean--variance & 0.0719 & 0.1166 & 0.617 & 0.3112 & 0.231 & 0.0179 & 11.82 & 0.166  \\
Minimum variance & 0.0675 & 0.1163 & 0.581 & 0.3147 & 0.215 & 0.0178 & 11.80 & 0.166 \\
$1/N$ & 0.0946 & 0.1541 & 0.614 & 0.3764 & 0.251 & 0.0230 & 90.00 & 0.011  \\
$1/N$ buy-and-hold & 0.0952 & 0.1590 & 0.599 & 0.3950 & 0.241 & 0.0235 & 50.98 & 0.068  \\
\bottomrule
\end{tabular}
\end{table}
 
\begin{figure}[h!]
\centering
\includegraphics[width=0.75\textwidth]{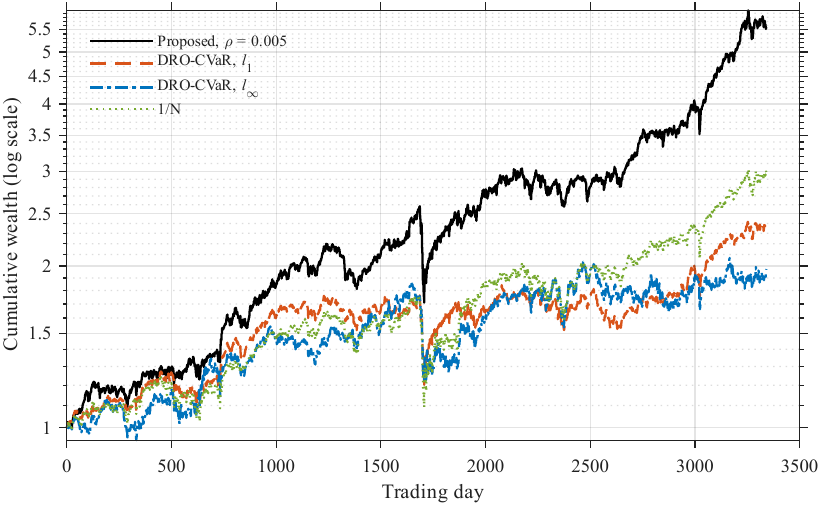}
\caption{Cumulative out-of-sample wealth, FTSE 2012--2026. Proposed model at
$(\rho,\lambda)=(0.005,0.10)$, DR-CVaR under both ground metrics, and the
equally weighted benchmark.}
\label{fig:wealth}
\end{figure}
 
\subsection{Risk-measure comparison: the expectile against Conditional Value-at-Risk}
\label{sec:h0}
 
The comparison in Table~\ref{tab:h0} keeps the ambiguity set, ground metric, radius, trade-off weight, return target, rebalancing interval and constraint set fixed across models. The only difference is the choice of risk functional.

\begin{table}[h!]
\centering
\caption{Proposed model against DR-CVaR-Mean under the $\ell_{1}$ ground
metric, at matched $(\rho,\lambda)$. Positive $\Delta$ favours the expectile.}
\label{tab:h0}
\small
\begin{tabular}{@{}llrrrrrrr@{}}
\toprule
& & \multicolumn{2}{c}{Sharpe} & \multicolumn{2}{c}{Ann.\ return} &
\multicolumn{2}{c}{Eff.\ $n$} & \\
\cmidrule(lr){3-4}\cmidrule(lr){5-6}\cmidrule(lr){7-8}
$\rho$ & $\lambda$ & expectile & CVaR & expectile & CVaR & expectile & CVaR &
$\Delta$Sharpe \\
\midrule
0.001 & 0.10 & 1.005 & 0.927 & 0.2062 & 0.1393 &  4.98 & 10.04 & $+0.078$ \\
0.001 & 0.15 & 0.937 & 0.839 & 0.1613 & 0.1143 &  7.14 & 12.02 & $+0.098$ \\
0.001 & 0.20 & 0.941 & 0.829 & 0.1441 & 0.1056 &  9.29 & 12.25 & $+0.112$ \\
0.005 & 0.10 & 0.968 & 0.571 & 0.1394 & 0.0719 & 18.64 & 18.59 & $+0.397$ \\
0.005 & 0.15 & 0.867 & 0.365 & 0.1166 & 0.0456 & 20.81 & 16.70 & $+0.502$ \\
0.005 & 0.20 & 0.783 & 0.474 & 0.1007 & 0.0603 & 20.95 & 16.32 & $+0.310$ \\
0.010 & 0.10 & 0.832 & 0.403 & 0.1127 & 0.0519 & 28.77 & 23.19 & $+0.429$ \\
0.010 & 0.15 & 0.641 & 0.458 & 0.0812 & 0.0615 & 26.99 & 22.46 & $+0.183$ \\
0.010 & 0.20 & 0.543 & 0.462 & 0.0674 & 0.0621 & 25.53 & 22.44 & $+0.081$ \\
\midrule
\multicolumn{8}{@{}l}{Mean advantage} & $+0.243$ \\
\bottomrule
\end{tabular}
\end{table}
 
The proposed model yields the higher Sharpe ratio in all nine cells. Because the daily returns are dependent, Supplementary Section~S2 also reports two-sided, studentized circular block-bootstrap tests following \citet{ledoit2008robust}. None of the three differences at $\rho=0.001$ is significant at the $5\%$ level, whereas all six differences at $\rho\in\{0.005,0.010\}$ are significant. The results therefore indicate a clearer distinction between the two models as the ambiguity radius increases. They do not, however, imply that the difference is statistically detectable in a neighborhood of the nominal model.

At $(\rho,\lambda)=(0.005,0.10)$ the two models hold almost exactly the same number of effective assets, $18.64$ against $18.59$, yet the expectile portfolio earns $13.9\%$ a year against $7.2\%$. At equal diversification, the proposed model selects a less risky portfolio.
 
\subsection{Ground metric comparison and degeneracy}
\label{sec:h2}
 
Table~\ref{tab:wang} reports the performance of the DR-CVaR-Mean model under the $\ell_{\infty}$ ground metric. For $\rho\geq0.005$, the realized multiplier equals $-\Delta/2$ at all $160$ rebalancing dates, corresponding to the minimum of $\gamma_{2}^{0}$. Moreover, $\beta_{t,L}=r_t$ throughout, so the Case~A share is zero and neither $\hat{\mu}_{t}$ nor $\rho_t$ enters the allocation-relevant part of the objective. Consequently, six parameter combinations yield the same portfolio despite a fivefold increase in $\rho$ and a twofold increase in $\lambda$. In this region, the resulting allocation is therefore insensitive to both parameters.
 
This behavior follows from Proposition~\ref{prop:norm}(i) together with Proposition~\ref{prop:degeneracy}. Under the $\ell_{\infty}$ ground metric we have $\norm{x_t}_{*}= 1$ on the simplex. Hence the robust term becomes constant across feasible portfolios and does not distinguish between alternative allocations. The associated degeneracy threshold $\rho_{\mathrm{crit}}$ is therefore identical for all portfolios. This limits the ability of the robust term to influence the allocation decision.

The empirical results are consistent with this observation. The model based on the $\ell_{\infty}$ ground metric does not outperform the equally weighted benchmark in any of the nine cases. In contrast the proposed model based on the $\ell_{1}$ ground metric outperforms the benchmark in eight of the nine cases and performs better on all four evaluation criteria in six cases. These results indicate that the choice of ground metric has a direct effect on whether the distributionally robust component remains effective in distinguishing among portfolio allocations.

 \begin{table}[htbp]
\centering
\caption{DR-CVaR-Mean under the $\ell_{\infty}$ ground metric.
$a^{\star}$ is the realised dual multiplier and $s_{A}$ the Case A share. The
kink of $\gamma_{2}^{0}$ lies at $-\Delta/2=-10\lambda$.}
\label{tab:wang}
\small
\begin{tabular}{@{}llrrrrrrr@{}}
\toprule
$\rho$ & $\lambda$ & $a^{\star}$ & $-\Delta/2$ & $s_{A}$ & Ann.\ ret. &
Sharpe & $\mathrm{CVaR}_{.95}$ & Eff.\ $n$ \\
\midrule
0.001 & 0.10 & $-21.94$ & $-1.0$ & 0.04 & 0.0769 & 0.478 & 0.0234 & 7.25 \\
0.001 & 0.15 & $-19.98$ & $-1.5$ & 0.00 & 0.0582 & 0.378 & 0.0228 & 8.25 \\
0.001 & 0.20 & $-19.34$ & $-2.0$ & 0.00 & 0.0582 & 0.378 & 0.0228 & 8.25 \\
0.005 & 0.10 & $-1.000$ & $-1.0$ & 0.00 & 0.0654 & 0.390 & 0.0244 & 7.15 \\
0.005 & 0.15 & $-1.500$ & $-1.5$ & 0.00 & 0.0654 & 0.390 & 0.0244 & 7.15 \\
0.005 & 0.20 & $-2.000$ & $-2.0$ & 0.00 & 0.0654 & 0.390 & 0.0244 & 7.15 \\
0.010 & 0.10 & $-1.000$ & $-1.0$ & 0.00 & 0.0654 & 0.390 & 0.0244 & 7.15 \\
0.010 & 0.15 & $-1.500$ & $-1.5$ & 0.00 & 0.0654 & 0.390 & 0.0244 & 7.15 \\
0.010 & 0.20 & $-2.000$ & $-2.0$ & 0.00 & 0.0654 & 0.390 & 0.0244 & 7.15 \\
\bottomrule
\end{tabular}
\end{table}

\section{Conclusions}
\label{sec:conclusions}
We developed a multi-period portfolio model that trades expected return against the worst-case expectile over a Wasserstein ball. The main methodological result is an envelope theorem that converts the implicit worst-case expectile into the unique root of a robust expectation with a two-piece affine integrand. Standard Wasserstein duality then yields four parametric linear programs with $O(N)$ constraints. This reformulation makes an elicitable coherent risk measure computationally available in a framework previously dominated by CVaR.

The analysis also clarifies when robustness changes the allocation. Its marginal price under an expectile is moderated by an endogenous exceedance frequency, whereas the corresponding CVaR penalty is fixed by the confidence level. A decision-dependent critical radius characterizes the point at which the empirical tail component becomes inactive. Beyond this threshold, the ground metric determines the limiting portfolio geometry: concentration under an $\ell_\infty$ ground metric and diversification under an $\ell_1$ ground metric for long-only portfolios. In the experiment, the expectile model has a higher Sharpe ratio than the fully matched CVaR model in
all nine parameter cells. The differences are statistically significant in all six cells with $\rho\in\{0.005,0.010\}$, but not in the three cells closest to the nominal distribution. In six of the nine parameter settings, the expectile model simultaneously outperforms the equally weighted portfolio in terms of return, volatility, drawdown, and CVaR. The $\ell_\infty$ norm exhibits the predicted degeneracy, as six distinct parameter combinations result in an identical portfolio allocation. For implementation, the radius should be reported relative to
$\rho_{\mathrm{crit}}$, not only in absolute units, and the share of Case A rebalancing dates should accompany performance statistics because it records how often the intended mean--risk trade-off is active. The ground metric should likewise be treated as an economic modeling choice. In particular, an
$\ell_\infty$ metric makes the first-order robust term constant on the long-only simplex; an $\ell_1$ metric instead penalizes the largest allocation.

The evidence is limited to a single equity market and is based on end-of-sample constituents, so survivorship bias and market-specific effects cannot be ruled out. The base comparison uses gross returns, although the supplement shows that the key results remain robust to the transaction-cost levels considered.
The theory also assumes long-only allocations, an $\ell_1$ Wasserstein ball, and a separable period-wise objective. Future work could integrate turnover and short-selling constraints, test other markets, and study nested conditional expectiles.

\appendix
\section{Proofs}\label{appen}
\subsection{Proofs of results in Section~\ref{sec:model}}

\subsubsection{Lemma~\ref{lem:timeconsistency}}
\begin{proof}
Monotonicity of $e_{\tau}$ holds for every $\tau\in(0,1)$ by property~(b), and monotonicity of the expectation operator is immediate; a convex combination of monotone mappings is monotone. Coherence on $\tau\in[\tfrac12,1)$ follows from properties (a)--(c), the expectation being coherent and the class of coherent risk measures being closed under convex combination. Monotonicity is the sole hypothesis required by Theorem~3 of \citet{chen2013optimal}, whence the
separable decomposition \eqref{eq:separable} applies and time consistency follows.
\end{proof}

\subsubsection{Lemma~\ref{lem:nonrep}}

\begin{proof}
Comonotone additivity of a law-invariant coherent risk measure is equivalent, by
\citet{kusuoka2001law}, to representability as a mixture of CVaR functionals. \citet{bellini2014generalized} show that among generalized quantiles the expectile fails comonotone additivity for every $\tau\neq\tfrac12$; the claim follows.
\end{proof}

\subsection{Proofs of results in Section~\ref{sec:reformulation}}
\subsubsection{Lemma~\ref{lem:fixedpoint}}
\begin{proof}
Write $e=e^{F}_{\tau}(L)$. Definition \eqref{eq:expectile-def} states
$\tau\mathbb{E}_{F}(L-e)^{+}=(1-\tau)\mathbb{E}_{F}(e-L)^{+}$. Since
$\mathbb{E}_{F}(L-e)^{+}-\mathbb{E}_{F}(e-L)^{+}=\mathbb{E}_{F}[L]-e$, substitution gives
$\tau\mathbb{E}_{F}(L-e)^{+}=(1-\tau)\bigl(\mathbb{E}_{F}(L-e)^{+}-\mathbb{E}_{F}[L]+e\bigr)$, hence
$(2\tau-1)\mathbb{E}_{F}(L-e)^{+}=(1-\tau)\bigl(e-\mathbb{E}_{F}[L]\bigr)$. Dividing by $1-\tau>0$ and
rearranging yields \eqref{eq:fixedpoint}.
\end{proof}

\subsubsection{Proof of Theorem~\ref{thm:envelope}}
\label{app:envelope}
\begin{proof}
Fix $x$, write $L=L_x$ and
$g(q;F):=\mathbb{E}_F[L]+\kappa_\tau\,\mathbb{E}_F(L-q)^+-q$, so that
$\Psi=\sup_{F\in\mathcal{P}}g(\cdot\,;F)$.

\emph{(i) A slope bound uniform in $F$.} For each $F\in\mathcal{P}$ the map
$q\mapsto\mathbb{E}_F(L-q)^+$ is non-increasing and $1$-Lipschitz, so for $q_1<q_2$
\[
-(1+\kappa_\tau)(q_2-q_1)\ \le\ g(q_2;F)-g(q_1;F)\ \le\ -(q_2-q_1),
\]
the two bounds being independent of $F$. Each $g(\cdot\,;F)$ is therefore strictly decreasing, with unique root $e^F_\tau(L)$ by Lemma~\ref{lem:fixedpoint}; and since the bounds are uniform, the
pointwise supremum $\Psi$ inherits both the $(1+\kappa_\tau)$-Lipschitz continuity and the strict decrease at rate at least one. Finiteness of $\Psi$ follows from $|L|\le\|\xi\|\,\|x\|_*$ and $(L-q)^+\le|L|+|q|$ together with $\sup_{F\in\mathcal{P}}\mathbb{E}_F\|\xi\|<\infty$. As $\Psi(q)\to+\infty$ for $q\to-\infty$ and $\Psi(q)\to-\infty$ for $q\to+\infty$, continuity and
strict monotonicity give a unique root $q^\star$.

\emph{(ii) Identification of the root.} For every $F\in\mathcal{P}$ we have
$g(q^\star;F)\le\Psi(q^\star)=0$, and since $g(\cdot\,;F)$ is strictly decreasing with root $e^F_\tau(L)$ this forces $e^F_\tau(L)\le q^\star$; hence
$\sup_{F}e^F_\tau(L)\le q^\star$. Conversely, fix $q<q^\star$. Strict monotonicity gives $\Psi(q)>0$, so by definition of the supremum there is $F\in\mathcal{P}$ with $g(q;F)>0$ and therefore $e^F_\tau(L)>q$. Letting $q\uparrow q^\star$ yields the reverse inequality, so $\sup_{F}e^F_\tau(L)=q^\star$, and $q^\star=\min\{q:\Psi(q)\le0\}$ because $\Psi$ is continuous and strictly decreasing. The supremum is attained whenever $\mathcal{P}$ is weakly compact and
$F\mapsto\mathbb{E}_F[L]+\kappa_\tau\mathbb{E}_F(L-q^\star)^+$ is upper semicontinuous, which holds for the \eqref{eq:ambiguityset} under \eqref{eq:moment}.
\end{proof}
\subsubsection{Lemma~\ref{lem:twopiece}}
\begin{proof}
Immediate from $(\cdot)^{+}=\max\{0,\cdot\}$: the first branch corresponds to
$-\xi_{t}^{\top}x_{t}-q_{t}\leq0$ and the second to $-\xi_{t}^{\top}x_{t}-q_{t}>0$.
\end{proof}

\subsubsection{Lemma~\ref{lem:betaL}}
\begin{proof}
$D_{t}(q,\beta)$ is the optimal value of a linear program in $F_{t}$ whose right-hand side is perturbed by $\beta$, hence concave in $\beta$. By Lemma~\ref{lem:twopiece} both affine pieces of $h_{t}$ carry a negative coefficient on $\xi_{t}^{\top}x_{t}$, so $h_{t}$ is non-increasing in $\xi_{t}^{\top}x_{t}$, and, since $h_{t}$ depends on $\xi_{t}$ only through $\xi_{t}^{\top}x_{t}$, every maximizing sequence for $\mathbb{E}_{F_{t}}[h_{t}]$ over $\mathcal{P}_{t}$ spends the whole budget $\rho_{t}$ reducing
$\xi_{t}^{\top}x_{t}$ at the maximal rate $w_{t}$ per unit of $\norm{\cdot}_{1}$
transport, so its mean tends to
$\hat{\mu}_{t}^{\top}x_{t}-\rho_{t}w_{t}\leq\beta_{t,L}$. Concavity then gives $D_{t}$ non-increasing on $[\beta_{t,L},\beta_{t,U}]$, whence $\gamma_{t,1}\leq0$ and $a_{t,1}=1+\gamma_{t,1}\leq1$. Substituting $a_{t,1}-1\leq0$ and $\kappa_{\tau} P\geq0$ into \eqref{eq:slope} yields \eqref{eq:slope-negative}.
\end{proof}

\subsubsection{Proof of Theorem~\ref{thm:reformulation}}
\begin{proof}
\emph{(i) The inner dual.} By Lemma~\ref{lem:twopiece} the integrand is
$h_t(\xi)=\max\{-k_1\xi^{\top}x_t+c_1q_t,\,-k_2\xi^{\top}x_t+c_2q_t\}$ with
$(k_1,k_2,c_1,c_2)=(1,1+\kappa_\tau,0,-\kappa_\tau)$, a maximum of two affine functions and hence convex and Lipschitz. Wasserstein duality \citep{mohajerin2018data, gao2023distributionally,zhang2025short} applied to
\eqref{eq:inner-a}--\eqref{eq:inner-c}, with multiplier $\gamma_{t,1}$ on the mean constraint and $\gamma_{t,2}\ge0$ on the transport constraint, yields \eqref{eq:dual-a}--\eqref{eq:dual-d}. Strong duality requires $\beta_t$ interior to the
attainable mean interval, which holds in Case~B by \eqref{eq:caseB}. In Case~A the selected endpoint is extreme and Slater fails, but $D_t(q,\cdot)$ is concave by Lemma~\ref{lem:betaL} and the optimal transport value is continuous in the mean constraint, so the dual value is continuous at the endpoint and the substitution remains exact.

\emph{(ii) Factorization of $\gamma_{t,2}^\star$.} Under the $\ell_1$ ground metric the dual norm is $\ell_\infty$, so \eqref{eq:dual-d} read $|a_{t,1}|w_t\le\gamma_{t,2}$ and
$|a_{t,1}+\kappa_\tau|w_t\le\gamma_{t,2}$ with $w_t=\|x_t\|_\infty>0$. Since $\rho_t>0$ the multiplier binds at its smallest admissible value, giving $\gamma_{t,2}^\star=\gamma_{t,2}^0(a_{t,1})w_t$ with
$\gamma_{t,2}^0$ as in \eqref{eq:gamma2}. The factor $w_t$ leaves the kink at $a_{t,1}=-\kappa_\tau/2$ unchanged, and $\gamma_{t,2}^0\ge\kappa_\tau/2>0$ throughout, so no separate dual-feasibility restriction on $a_{t,1}$ is required.

\emph{(iii) The fixed point as an inequality.} By Theorem~\ref{thm:envelope} the worst-case expectile is $\min\{q:D_t(q,\beta_t)\le q\}$. Encoding this as \eqref{eq:EXA1-b} or \eqref{eq:EXB-b} is exact because $q_t$ enters the objective with strictly positive coefficient $\lambda$, so the minimization drives it to the boundary of the constraint, which is the root.

\emph{(iv) Exhaustiveness and exactness.} By Lemma~\ref{lem:betaL} only $\beta_{t,L}$ arises, and by \eqref{eq:betabounds} it takes one of two forms according to whether $\hat{\mu}_t^{\top}x_t-\rho_tw_t\ge r_t$. Crossing this with the two regimes of $\gamma_{t,2}^0$ gives four mutually exclusive and collectively exhaustive branches, within each of which the substitutions of $\gamma_{t,2}^\star$ and $\beta_t^\star$ are exact by (ii) and Lemma~\ref{lem:betaL}; the optimal value
is the minimum over the four. Exactness of the linearization $w_t^\star=\|x_t\|_\infty$ on each case region, and its conservatism outside, is Remark~\ref{rem:w}. 
\end{proof}

\subsection{Proofs of propositions in Section~\ref{sec:properties}}
\subsubsection{Proof of Proposition~\ref{prop:sensitivity}}
\begin{proof}
\emph{(i) The robust fixed point.} By Lemma~\ref{lem:twopiece} the integrand $h_t$ in \eqref{eq:Psi} is a maximum of two affine functions of $\xi_t$ with gradients $-x_t$ and $-(1+\kappa_\tau)x_t$, hence Lipschitz with modulus $(1+\kappa_\tau)\|x_t\|_*$. Without the mean constraint the dual \eqref{eq:dual-d} at $\gamma_{t,1}=0$ reads $\gamma_{t,2}\ge(1+\kappa_\tau)\|x_t\|_*$, in agreement with $\gamma_{t,2}^0(1)=1+\kappa_\tau$, and the dual objective is increasing in $\gamma_{t,2}$; the minimum is attained at $\gamma_{t,2}^\star=(1+\kappa_\tau)\|x_t\|_*$, so that by Theorem~\ref{thm:envelope} the quantity $q_t^\star$ is the unique fixed point of
\begin{equation}\label{eq:G}
G(q):=\sup_{F_t\in\mathcal{P}_t}\mathbb{E}_{F_t}[h_t]+q
=(1+\kappa_\tau)\rho_t\|x_t\|_*+\bar{L}_t+\kappa_\tau\overline{(\hat{L}_t-q)^+}.
\end{equation}
Since $G$ is non-decreasing in $\rho_t$ and non-increasing in $q$, while $\Psi=G-q$ is strictly decreasing, $q_t^\star$ is non-decreasing in $\rho_t$; hence $S_t$ shrinks and $m_t$ is non-increasing. Holding $S_t$ fixed gives
$\overline{(\hat{L}_t-q)^+}=N_t^{-1}\sum_{i\in S_t}(\hat{L}_{t,i}-q)$, and $q_t^\star=G(q_t^\star)$ rearranges to
\begin{equation}
q_t^\star\Bigl(1+\tfrac{\kappa_\tau m_t}{N_t}\Bigr)
=(1+\kappa_\tau)\rho_t\|x_t\|_*+\bar{L}_t+\frac{\kappa_\tau}{N_t}\sum_{i\in S_t}\hat{L}_{t,i},\nonumber
\end{equation}
which is affine in $\rho_t$; differentiating yields \eqref{eq:slope}. As $m_t$ is non-increasing in $\rho_t$ the slope is non-decreasing, so $q_t^\star$ is convex.

\emph{(ii) The CVaR comparison.} By the Rockafellar--Uryasev representation \citep{rockafellar2000optimization}, $\mathrm{CVaR}^F_\alpha(L_{x_t})=\min_q\{q+(1-\alpha)^{-1}\mathbb{E}_F(L_{x_t}-q)^+\}$.
The objective is convex in $q$ and linear in $F$ and $\mathcal{P}_t$ is convex, so the supremum over $F_t$ and the minimum over $q$ may be interchanged. For each fixed $q$ the map
$\xi_t\mapsto(-\xi_t^{\top}x_t-q)^+$ is Lipschitz with modulus $\|x_t\|_*$, whence
$\sup_{F_t}\mathbb{E}_{F_t}(L_{x_t}-q)^+=\rho_t\|x_t\|_*+\overline{(\hat{L}_t-q)^+}$. Substituting and minimizing over $q$ gives \eqref{eq:cvar-sensitivity}: the robust term is additively separable from the data and the minimizing $q$ is the empirical Value-at-Risk, independently of $\rho_t$. 
\end{proof}

\subsubsection{Proof of Proposition~\ref{prop:degeneracy}}

\begin{proof}
Put $q_0:=\max_i\hat{L}_{t,i}$ and let $G$ be as in \eqref{eq:G}. Since $\Psi=G-q$ is
strictly decreasing with unique root $q_t^\star$, we have $q_t^\star\ge q_0$ iff $G(q_0)\ge q_0$;
as $\overline{(\hat{L}_t-q_0)^+}=0$ this reads
$(1+\kappa_\tau)\rho_t\|x_t\|_*\ge\mathrm{sp}_t(x_t)$, that is
$\rho_t\ge\rho^{\mathrm{crit}}_t(x_t)$, proving (i). On that region every $\hat{L}_{t,i}\le q_t^\star$, so by Lemma~\ref{lem:twopiece} the second piece is inactive at every sample point, the fixed point reduces to \eqref{eq:degenerate-q}, and its derivative is the $m_t=0$ endpoint of \eqref{eq:slope},
maximal since $m_t\ge0$; this is (ii). If $\rho_t<\rho^{\mathrm{crit}}_t$ then $q_t^\star<q_0$ by (i), so the index attaining $q_0$ has $\hat{L}_{t,i}>q_t^\star$, giving (iii). If
$\mathrm{sp}_t(x_t)=0$ then $\rho^{\mathrm{crit}}_t(x_t)=0$ and the equivalence in (i) holds for every $\rho_t\ge0$.
\end{proof}

\subsubsection{Proof of Proposition~\ref{prop:nominal}}

\begin{proof}
At $\rho_t=0$ the ball is the singleton $\{F_{t,N_t}\}$ and \eqref{eq:betabounds} gives $\beta_{t,L}=\max\{r_t,\hat{\mu}_t^{\top}x_t\}$, $\beta_{t,U}=\hat{\mu}_t^{\top}x_t$; the interval is
non-empty iff $\hat{\mu}_t^{\top}x_t\ge r_t$ and then reduces to that point, so Case~B is empty. This is (i), and (iii) follows since $w_t$ enters only through $\rho_tw_t$.

For (ii), each $s_{t,i}$ carries coefficient $1/N_t>0$ in \eqref{eq:EXA1-b} and so attains the larger of its bounds \eqref{eq:EXA1-a}--\eqref{eq:EXA1-d},
$s^\star_{t,i}=a_{t,1}\hat{L}_{t,i}+\kappa_\tau(\hat{L}_{t,i}-q_t)^+$. Substituting at $\rho_t=0$ and using $\bar{L}_t=-\hat{\mu}_t^{\top}x_t$, the terms in $a_{t,1}$ cancel and the constraint becomes
$q_t\ge\bar{L}_t+\kappa_\tau\overline{(\hat{L}_t-q_t)^+}$---the mean constraint being redundant at $\rho_t=0$, its multiplier is unidentified. Since $\lambda>0$ this binds (Remark~\ref{rem:no-bisection}) and its unique root is $e^{F_{t,N_t}}_\tau(L_{x_t})$ by Lemma~\ref{lem:fixedpoint}. Substituting (i)--(iii) into \eqref{eq:EXA1-a} gives (iv).
\end{proof}

\subsubsection{Proof of Proposition~\ref{prop:norm}}

\begin{proof}
The norm enters Theorem~\ref{thm:reformulation} through three channels only: the term $\gamma_{t,2}^0(a_{t,1})\rho_t\|x_t\|_*$ in the fixed-point constraints, the Kantorovich bound $\rho_t\|x_t\|_*$ underlying \eqref{eq:betabounds}, and the threshold \eqref{eq:rhocrit}, which scales as $\|x_t\|_*^{-1}$. On $\mathcal{X}$ we have $\|x_t\|_1\equiv1$, giving (i); $\|x_t\|_\infty\ge1/n$ and $\|x_t\|_2\ge n^{-1/2}$, each with equality exactly at $x_t=x^{\mathrm{ew}}$ and each attaining $1$ at the vertices, giving the concentration penalties in (ii) and (iii). Linearity in (ii) is Remark~\ref{rem:w}; in (iii) the constraint $\gamma_{t,2}\ge\gamma_{t,2}^0(a_{t,1})\|x_t\|_2$ is a second-order cone constraint once $a_{t,1}$ is fixed, all others being linear.
\end{proof}

\section*{Data availability}
The data sources have been shared in the article.

\section*{Acknowledgements}
The first author is thankful to the financial assistance provided by the Council of Scientific and Industrial Research (CSIR), India (award number - 09/0086(17076)/2023-EMR-I) for carrying this research work. The DST-FIST grant SR/FST/MS-1/2019/45 is acknowledged for the computing facility in the department.

\bibliographystyle{apalike}
\bibliography{sample}
\end{document}